\pgfplotsset{width=10cm,compat=1.9}
\newcommand{\Norm}[1]{\ensuremath{\left\| #1 \right\|}}
\newcommand{\eps}{\ensuremath{\varepsilon}}
\newcommand{\RR}{\ensuremath{\mathbb{R}}}
\newcommand{\zob}{\ensuremath{\{0,1,\bot\}}}
\newcommand{\ncube}{\ensuremath{[0,1]^n}}
\newcommand{\NP}{\ensuremath{\mathrm{NP}}}
\newcommand{\ExistR}{\ensuremath{\exists\RR}}
\newcommand{\PPAD}{\ensuremath{\mathrm{PPAD}}}
\newcommand{\TFNP}{\ensuremath{\mathrm{TFNP}}}
\newcommand{\FIXP}{\ensuremath{\mathrm{FIXP}}}
\newcommand{\LinearFIXP}{\ensuremath{\mathrm{Linear\textrm{-}FIXP}}}
\newcommand{\PURIFY}{\ensuremath{\mathsf{PURIFY}}}
\newcommand{\NAND}{\ensuremath{\mathsf{NAND}}}
\newcommand{\ETR}{\textup{\textsc{ETR}}}
\newcommand{\QUAD}{\textup{\textsc{Quad}}}
\newcommand{\FEAS}{\textup{\textsc{4Feas}}}
\newcommand{\PureCircuit}{\textup{\textsc{Pure-Circuit}}}
\newcommand{\epsCDSClearing}{\ensuremath{\eps}\textup{\textsc{-CDS-Clearing}}}
\newcommand{\CDSHasClearing}{{\textup{\textsc{CDS-HasClearing}}}}
\newcommand{\CDSCanDefault}{{\textup{\textsc{CDS-CanDefault}}}}
\newcommand{\CDSCanSurvive}{{\textup{\textsc{CDS-CanSurvive}}}}
\newcommand{\trunczo}[1]{\ensuremath{\left\llbracket #1 \right\rrbracket}}
\newcommand{\dec}{\ensuremath{\operatorname{dec}}}
\newtheorem{theorem}{Theorem}
\newtheorem{proposition}{Proposition}
\newtheorem{lemma}{Lemma}
\newtheorem{corollary}{Corollary}
\newtheorem{definition}{Definition}
\begin{document}

\title{Improved Hardness Results for the Clearing Problem in\\Financial Networks with Credit Default Swaps}

\author{Simon Dohn}
\author{Kristoffer Arnsfelt Hansen}
\author{Asger Klinkby}
\affil{Aarhus University, Denmark}
\date{}
\maketitle

\begin{abstract}
  We study computational problems in financial networks of banks
  connected by debt contracts and credit default swaps (CDSs). A main
  problem is to determine \emph{clearing} payments, for instance right
  after some banks have been exposed to a financial shock. Previous
  works have shown the $\eps$-approximate version of the problem to be
  $\PPAD$-complete and the exact problem $\FIXP$-complete. We show
  that $\PPAD$-hardness hold when $\eps\approx 0.101$, improving the
  previously best bound significantly. Due to the fact that the
  clearing problem typically does not have a unique solution, or that
  it may not have a solution at all in the presence of default costs,
  several natural decision problems are also of great interest. We
  show two such problems to be $\ExistR$-complete, complementing
  previous $\NP$-hardness results for the approximate setting.
\end{abstract}

\section{Introduction}
The worlds financial systems form large networks in which financial
entities such as banks are closely interconnected by contracts. This
results in nontrivial dependencies between banks, where a default of a
single bank may affect large parts of the financial system. The
prevalent use of financial \emph{derivative} contracts, where a debt
obligation depends on other contracts introduces additional
complicated dependencies in the system.

A way to assess the complexity arising in such \emph{financial
  networks} is to study computational problems that arise in these
networks from the viewpoint of algorithms and computational
complexity. In this paper we consider the model of financial networks
defined by Schuldenzucker, Seuken
and~Battiston~\cite{SchuldenzuckerSSB2020-Default-Ambiguity}. This
models a static situation involving a number of financial entities
(e.g.\ banks) that have already entered (simple) debt contracts and
credit default swaps (CDSs) among each other. This is an extension of
the basic model of Eisenberg and
Noe~\cite{EisenbergNoe2001-Systemic-Risk} with CDSs as well as the
addition of default costs introduced to the model by Rogers and
Veraart~\cite{RogersVeraart2013-Failure-Rescue}.

After an event where a number of banks have experienced a financial
shock on their assets, some of these may become unable to meet their
obligations, leading them into default. This may in turn trigger
additional events in the financial network eventually causing a ripple
effect. This leads to the computational problem of evaluating all
contracts of the network simultaneously to find clearing payments
between the banks. More precisely we consider the problem of computing
a \emph{clearing recovery rate vector}, that describes for each bank
the fraction of its liabilities it is able to pay, and as a
consequence also which banks are in default.  This problem has been
considered in several prior
works. Schuldenzucker~et~al.~\cite{SchuldenzuckerSSB2020-Default-Ambiguity}
showed that, under mild assumptions, a clearing recovery rate vector
always exists for the setting where no costs are incurred due to the
process of bankruptcy, thereby turning the computational problem into
a total search problem. In a different work,
Schuldenzucker~et~al.~\cite{SchuldenzuckerSB2019-Complexity} proved
that an \eps-approximate version of the clearing problem is
\PPAD-complete and Ioannidis, Keijzer and
Ventre~\cite{IoannidisKV2022-Strong-Approximations} subsequently
proved that computing an (exact) clearing recovery rate vector is
\FIXP-complete. From an algorithmic viewpoint essentially nothing is
known. Indeed, one may observe that a trivial solution gives a
$\frac{1}{2}$-approximate solution and it is an open problem to find a
polynomial time approximation algorithm with a better guarantee. The
result of Schuldenzucker~et~al.\ proved \PPAD-hardness for a small
(unspecified) constant~$\eps>0$. From a practical perspective,
\PPAD-hardness for a miniscule constant $\eps>0$ may be less
concerning, and it is therefore of interest to determine for how large
values of $\eps$ the problem remains \PPAD-hard. Recently,
Ioannidis~et~al.~\cite{IoannidisKV2023-PPAD} considerably strengthened
the result of Schuldenzucker~et~al.\ and proved \PPAD-hardness for
$\eps\leq \frac{3-\sqrt{5}}{16} \approx 0.048$. We improve this
further and show \PPAD-hardness for
$\eps \leq 5 - 2 \sqrt{6} \approx 0.101$. Like Ioannidis~et~al.\ we
obtain our result by reduction from the problem \PureCircuit, recently
introduced Deligkas, Fearnley, Hollender and
Melissourgos~\cite{DeligkasFHM22-Pure-Circuit}, which is a
``generalized circuit'' problem operating on the domain
$\zob$. Different from existing reductions (also to problems unrelated
to financial networks) from \PureCircuit\ we decode values of the
interval~$[0,1]$ to elements of $\zob$ in an \emph{asymmetric} way,
more suited to exploit the characteristics of CDSs in financial
networks. This in turn allows for a simpler reduction that
additionally yields a stronger result.

A clearing recovery rate vector of a financial network will typically
not be unique leading to the problem of \emph{selecting} one of
these. Simple questions a regulator might be interested in knowing the
answer to, could be whether a particular bank or set of banks are
bound to default, whether a default may be prevented, or whether they
are guaranteed to avoid default. When default costs are present, a
clearing recovery rate vector may not exist at all. This means that
computing a clearing recovery rate vector (in case it exists) may be
even harder in the presence of default costs.

These considerations lead to several natural decision problems and
Schuldenzuker~et~al.~\cite{SchuldenzuckerSB2019-Complexity} proved
that these are \NP-hard, in fact even for \emph{gap} versions of the
problems relevant for the setting of \eps-approximate clearing. We
complement these results by showing $\ExistR$-completeness for the
problem of deciding if a given bank may avoid default and, in the
setting of default costs (but on external assets only), for the
problem of whether a clearing recovery rate exists. Having default
costs only on external assets could model a setting where external
assets are illiquid in comparison to obligations within the financial
system, and having to liquidate these in response to a financial shock
may incur a considerable loss.

These $\ExistR$-hardness results have direct consequences from an
algorithmic perspective. Namely, solving these problems in the setting
of exact clearing solutions must involve algorithmic techniques
capable of solving general systems of polynomial equations. For the
complexity class $\ExistR$ we have the relation
$\NP \subseteq \ExistR$, and this inclusion is generally conjectured
to be strict. This suggests that the decision problems for exact
clearing are considerably harder than their approximate counterparts.

\section{Preliminaries}

\subsection{Complexity Classes}

Our results are concerned with the complexity classes $\PPAD$ and
$\ExistR$, but it will also be relevant to introduce the complexity
class $\FIXP$. The classes $\PPAD$ and $\FIXP$ are classes of total
search problems, whereas the class $\ExistR$ is a class of decision
problems.  We introduce the classes briefly below and refer to the
references for further details.

The class $\PPAD$ was defined by
Papadimitriou~\cite{Papadimitriou1994-TFNP} as the subclass of $\TFNP$
consisting of all problems reducible to the problem
\textsc{End-Of-Line}.  We shall however not use this definition
directly, and hence we do not define this problem formally. Instead,
to prove $\PPAD$-hardness we make use of the $\PureCircuit$ problem,
recently introduced by Deligkas, Fearnley, Hollender and
Melissourgos~\cite{DeligkasFHM22-Pure-Circuit} and for
$\PPAD$-membership we make use of general results about computing
$\eps$-almost fixed points of polynomially continuous and polynomially
computable functions. We introduce the $\PureCircuit$ problem in
detail in Section~\ref{sec:PPAD-hardness-approx}.

The class $\FIXP$ is a class of total real valued search problems
defined by Etessami and
Yannakakis~\cite{EtessamiYannakakis2010-FIXP}. It consists of all
problems reducible, by a so-called SL-reduction, to the problem of
finding a fixed point of a continuous function $F \colon D \to D$,
where $D \subseteq \RR^n$ convex set. The class $\FIXP$ allows several
different equivalent definitions in how $F$ and $D$ should be
described. A simple such definition is to let $D=\ncube$ and the
function $F$ be given by an algebraic circuit with $n$~inputs and
$n$~outputs over the basis $\{+,*,\max\}$ and allowing use of
arbitrary rational constants.

If we restrict the basis to be $\{+,*c,\max\}$, where $*c$ refers to
multiplication by a constant, we obtain the class $\LinearFIXP$. The
class can be viewed as a class of discrete total search problems and
with this interpretation Etessami and Yannakakis proved that
$\PPAD=\LinearFIXP$.

The class $\ExistR$ was defined formally by Schaefer and
Štefankovič~\cite{Schaefer2009-Complexity-Some-Geometric,SchaeferStefankovic2017-Nash-ETR}
and Bürgisser and
Cucker~\cite{BurgisserCucker2009-Exotic-Quantifiers}. One may think of
the class $\ExistR$ as having a relationship to $\NP$ similarly to the
relationship of $\FIXP$ to $\PPAD$. Schaefer and Štefankovič defined
the class $\ExistR$ to have the decision problem for the existential
theory of the reals, $\ETR$, as its complete problem, whereas
Bürgisser and Cucker defined the class as the constant free Boolean
part of the analogue class $\NP_\RR$ to $\NP$ in the Blum-Shub-Smale
model of computation. For proving $\ExistR$-membership the latter
definition is most convenient.

The standard \ExistR-complete problem is the problem \QUAD\ of
determining whether a system of quadratic polynomials in~$n$ variables
has a solution. Schaefer~\cite[Lemma~3.9]{Schaefer2013-Realizability}
proved $\QUAD$ remains $\ExistR$-hard even with a promise that the
system has a solution in the unit~$n$-ball whenever the system has any
solution. We need the analogous statement for the unit $n$-cube, which
we denote by $\QUAD(\ncube)$.
\begin{proposition}
  $\QUAD(\ncube)$ is \ExistR-hard.
  \label{prop:QUADcube}
\end{proposition}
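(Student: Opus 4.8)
The plan is to reduce from the unit-ball version of $\QUAD$ established by Schaefer via a single affine change of coordinates. Given a $\QUAD$ instance consisting of quadratic polynomials $p_1,\dots,p_m\in\QQ[x_1,\dots,x_n]$ carrying the promise that, whenever $p_1=\dots=p_m=0$ has a real solution, it has one inside the closed unit ball $\{x:\sum_i x_i^2\le 1\}$, I would output the polynomials $q_j(y)\coloneqq p_j(2y_1-1,\dots,2y_n-1)$ for $j=1,\dots,m$ in variables $y_1,\dots,y_n$. Substituting affine forms into a quadratic polynomial again yields a quadratic polynomial, and the resulting coefficients are rationals of polynomially bounded bit-length, so this transformation is computable in polynomial time. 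The key structural fact is that $\varphi\colon\RR^n\to\RR^n$, $\varphi(y)=2y-\allones$, is a bijection with $\varphi(\ncube)=[-1,1]^n$, and the Euclidean unit ball is contained in $[-1,1]^n$.

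Correctness of the reduction: a point $y^*$ solves $q_1=\dots=q_m=0$ if and only if $\varphi(y^*)$ solves $p_1=\dots=p_m=0$, so the two systems are equisolvable and the ``yes/no'' answer is preserved. For the promise required by $\QUAD(\ncube)$, suppose the output system has any real solution; then the input system does, and hence—by Schaefer's promise—it has a solution $x^{**}$ with $\sum_i (x_i^{**})^2\le 1$, which forces $\Abs{x_i^{**}}\le 1$ for every $i$, i.e.\ $x^{**}\in[-1,1]^n$. Then $\varphi^{-1}(x^{**})=(x^{**}+\allones)/2\in\ncube$ is a solution of the output system lying in the cube, as needed. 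Combining this with $\ExistR$-hardness of $\QUAD$ under the unit-ball promise gives $\ExistR$-hardness of $\QUAD(\ncube)$.

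I do not expect a genuine obstacle here: the argument is a routine change of variables. The one point that needs care is that the promise must transfer through the reduction, and this rests entirely on the containment of the Euclidean unit ball in $[-1,1]^n$—which is precisely what singles out the coordinate-wise rescale-and-shift $\varphi$ as the correct map. It is worth stressing that Schaefer's strengthened (promise) version is essential: starting from plain $\QUAD$ one could not guarantee a solution anywhere near the origin, so no bounded-domain promise would survive the reduction.
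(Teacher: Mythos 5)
Your argument is correct: the affine substitution $y\mapsto 2y-\allones$ maps $\ncube$ onto $[-1,1]^n$, which contains the unit ball, so both the equisolvability and Schaefer's promise transfer, and the coefficient sizes stay polynomial. This is exactly the routine derivation from Schaefer's unit-ball result that the paper invokes (without spelling it out) as one of its two suggested justifications, so you have simply supplied the details of the paper's own approach.
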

This result can either easily be derived from the result of Schaefer
or follows as a direct corollary of
\cite[Proposition~2]{Hansen2019-Real-Complexity}. Next we consider the
problem \FEAS. Here we are given a single degree~4 polynomial $p$
in~$n$ variables and the task is to decide if $p$ has a root. We let
$\FEAS(\ncube)$ refer to the promise version of the problem where a
root in $\ncube$ is guaranteed to exist in case $p$ has any root.
\begin{corollary}
  $\FEAS(\ncube)$ is \ExistR-hard.
  \label{cor:FEAScube}
\end{corollary}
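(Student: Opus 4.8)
The plan is to reduce from $\QUAD(\ncube)$, which is $\ExistR$-hard by Proposition~\ref{prop:QUADcube}. Given an instance of $\QUAD(\ncube)$ consisting of quadratic polynomials $q_1,\dots,q_m$ in variables $x_1,\dots,x_n$, with the promise that a solution in $\ncube$ exists whenever the system is satisfiable, I would form the single polynomial $p := \sum_{i=1}^m q_i^2$. Then $p$ has a root if and only if the system $q_1 = \dots = q_m = 0$ has a solution, and moreover any common solution of the $q_i$ is a root of $p$; in particular, if the system has a solution in $\ncube$, then $p$ has a root in $\ncube$. Since each $q_i$ has degree at most $2$, the polynomial $p$ has degree at most $4$, so this is a valid instance of $\FEAS(\ncube)$, and the reduction is clearly computable in polynomial time (the coefficients of $p$ are obtained from those of the $q_i$ by a polynomial number of additions and multiplications of rationals).

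The correctness argument then has two directions. If the $\QUAD$ instance is a yes-instance, the promise guarantees a solution $x^\ast \in \ncube$ of the system, and $p(x^\ast) = \sum_i q_i(x^\ast)^2 = 0$, so $p$ has a root in $\ncube$ as required by the promise of $\FEAS(\ncube)$. Conversely, if $p$ has any root $x^\ast \in \RR^n$, then since $p$ is a sum of squares of real polynomials we get $q_i(x^\ast) = 0$ for all $i$, so the $\QUAD$ instance is satisfiable. This establishes the many-one reduction and hence the $\ExistR$-hardness of $\FEAS(\ncube)$.

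There is essentially no serious obstacle here; this is a standard ``sum of squares collapses a system to a single polynomial'' trick, and the only thing to be slightly careful about is that it respects the $\ncube$ promise — which it does, precisely because every $\ncube$-solution of the system is an $\ncube$-root of $p$, so the promise is transported along the reduction without any additional work. One minor point worth stating explicitly is that $\FEAS(\ncube)$ is trivially in $\ExistR$ (deciding whether a polynomial has a real root is an instance of $\ETR$), so together with the hardness we in fact get $\ExistR$-completeness, though only hardness is needed for the downstream reductions to the financial-network problems.
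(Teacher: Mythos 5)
Your proposal is correct and is essentially identical to the paper's own argument: both reduce from $\QUAD(\ncube)$ via the sum-of-squares polynomial $p=\sum_i q_i^2$, noting it has degree at most $4$ and that the $\ncube$ promise is preserved since any $\ncube$-solution of the system is an $\ncube$-root of $p$. No gaps.
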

\begin{proof}
  We simply observe that the standard reduction \QUAD\ to \FEAS\
  preserves the promise. Namely, given quadratic polynomials
  $p_1,\dots,p_k$ in~$n$ variables we define the degree~4 polynomial
  $p(x) = \sum_{i=1}^k (p_i(x))^2$. It then follows that $x$ is a
  solution to the system of equations $p_1(x)=\dots=p_k(x)=0$ if and
  only if $p(x)=0$.
\end{proof}

\subsection{Financial Networks}
We consider the model of financial networks defined by Schuldenzucker,
Seuken
and~Battiston~\cite{SchuldenzuckerSSB2020-Default-Ambiguity}. This
models a static situation of banks that have already entered debt
contracts and CDSs among each other that must now all be evaluated
simultaneously.

\paragraph{Banks and Contracts}
A \emph{financial network} is a given by a finite set $N$ of
$n=\abs{N}$ \emph{banks}. Each bank~$i \in N$ holds an amount of
\emph{external assets} $e_i \geq 0$. Debt contracts are given by
$c_{i,j} \geq 0$ for distinct banks $i,j \in N$ and CDSs are given by
$c_{i,j}^k \geq 0$ for distinct banks $i,j,k \in N$. The numbers
$c_{i,j}$ and $c_{i,j}^k$ are commonly referred to as the
\emph{notionals} of the contracts. For notational convenience we let
$c_{i,i}=0$ and $c_{i,i}^j=c_{i,j}^i=c_{i,j}^j=0$ for all $i,j \in N$.

The contracts specify obligations for the \emph{writer} (debtor) to
pay an amount of money, the \emph{liability}, to the \emph{holder} of
the contract. A bank unable to fulfill all its obligations is \emph{in
  default}. The \emph{recovery rate} $r_i \in [0,1]$ of bank $i \in N$
is the fraction of liabilities it is able to pay. A bank is thus in
default if and only if $r_i<1$. A bank which is not in default is also
said to be \emph{solvent}.

When $c_{i,j}>0$, bank~$i$ is obligated to pay the (unconditional)
liability $c_{i,j}$ to bank~$j$. When $c_{i,j}^k>0$, bank~$i$ is
obligated to pay the (conditional) liability $(1-r_k)c_{i,j}^k$ to
bank~$k$, based on the recovery rate of the \emph{reference}
bank~$k$. As specified above, a bank can only take one of the two
roles of a debt contract or one of the three roles of a CDS.

A CDS allows an insurance on debt contracts: If bank~$j$ holds both a
simple debt contract written by bank~$k$ as well as a CDS with the
same notional written by bank~$i$ with reference~$k$, then bank $j$ is
guaranteed to receive the amount of the debt contract in the event
that bank~$k$ defaults (as long as bank~$i$ does not default as well).

\paragraph{Liabilities, Payments, and Assets}

Let $r \in \ncube$ be a given vector of recovery rates. The
\emph{liability} of bank~$i$ to bank~$j$ is given by
$
  l_{i,j}(r) = c_{i,j} + \sum_{k\in N}(1-r_k)c_{i,j}^k
$
and the
\emph{total liability} of bank~$i$ is
$l_i(r) = \sum_{j\in N} l_{i,j}(r)$.

The total liability is converted into a total payment according to the
principles of \emph{absolute priority} and \emph{limited liability},
meaning that banks must pay their liabilities in full if possible and
never pay more than the entirety of their assets (subtracted possible
default costs). Liabilities to individual banks are converted into
payments according to the principle of \emph{proportionality}, meaning
that the total payment is divided according to each individual
liability's share of the total liability. The \emph{payment} from
bank~$i$ to bank~$j$ is thus given by
$p_{i,j}(r) = r_i\cdot l_{i,j}(r)$.

The assets of bank~$i \in N$ before \emph{default costs} is given by
\begin{equation}
  a_i(r) = e_i + \sum_{j\in N} p_{j, i}(r) \enspace .
\end{equation}
Default costs are modeled by two parameters
$\alpha, \beta \in [0,1]$. When in default, a bank~$i$ is only able to
recover an $\alpha$-fraction of its external assets and a
$\beta$-fraction of incoming payments. The assets of bank~$i$ after
default costs is then given by
\begin{equation}
  a'_i (r) = \alpha e_i + \beta \sum_{j\in N} p_{j, i}(r) \enspace .
\end{equation}

\paragraph{Clearing Recovery Rates} Since the liabilities and assets
of the banks are given in terms of presumed recovery rates and in turn
also define recovery rates we are left with a fixed point problem.
\begin{definition}[Clearing recovery rate vector]
  A recovery rate vector $r \in \ncube$ is \emph{clearing} if it is a
  fixed point of the function $F \colon \ncube \to \ncube$ given by
  \begin{equation}
    F(r)_i =
    \begin{cases}
      1 & \text{ if } a_i(r) \geq l_i(r) \\
      \frac{a'_i(r)}{l_i(r)} & \text{ if } a_i(r) < l_i(r)
    \end{cases}
    \label{eq:F-function}
  \end{equation}
\label{def:Clearing}
\end{definition}

The function $F$ is introduced by Schuldenzucker, Seuken, and
Battiston as the \emph{update function}. This name is well-justified
for financial networks where every CDS is \emph{covered}, in which
case iterating the function $F$, starting from the all ones vector,
converges to a clearing recovery rate vector that maximizes the
recovery rate of all banks
simultaneously~\cite[Corollary~1]{SchuldenzuckerSSB2020-Default-Ambiguity}. In
general, iterating the function $F$ need not converge even when
clearing recovery rate vectors exist.

When the financial network has default costs, i.e.\ when either
$\alpha < 1$ or $\beta < 1$, simple and natural examples show that a
clearing recovery rate vector is not guaranteed to
exist~\cite[Theorem~1]{SchuldenzuckerSSB2020-Default-Ambiguity}. But
even in the case when $\alpha=\beta=1$ there exist (somewhat
artificial) examples (see Appendix~\ref{sec:degeneracy}) of financial
networks without a clearing recovery rate arising due to the
discontinuity in the definition of the function $F$. To avoid this we
shall as in previous works impose a \emph{non-degeneracy} assumption.
\begin{definition}[Non-degeneracy]
  A financial network is called \emph{non-degenerate} if for all
  $i \in N$, either $e_i>0$ or there exists $j \in N$ such that
  $c_{i,j}>0$.
  \label{def:non-degeneracy}
\end{definition}
In constructions it is useful to have banks that have written no
contracts but hold one or more contracts. In this case it is without
loss of generality to supply the bank with external assets, since the
bank is solvent by definition.

Note that for a non-degenerate financial network we are guaranteed
that $\max(a_i(r),l_i(r))$ is bounded from below by a fixed positive
constant. As noted by Ioannidis, Keijzer and
Ventre~\cite{IoannidisKV2022-Strong-Approximations}, in case
$\alpha=\beta=1$, a clearing recovery rate vector may thus be found as
a fixed point of the continuous function
$f \colon \ncube \to \ncube$ given by
\begin{equation}  
  f(r)_i = \frac{a_i(r)}{\max(a_i(r),l_i(r))} \enspace ,
    \label{eq:f-function}
\end{equation}
guaranteed to exist by Brouwer's fixed point theorem. Note that when
$l_i(r)\neq 0$ we have that $F(r)_i=\min\left(1,\frac{a_i(r)}{l_i(r)}\right)=f(r)_i$.

\paragraph{Notation}
For $x \in \RR$, we denote by $\trunczo{x}$ the \emph{truncation} of
$x$ to the interval $[0,1]$, i.e.\ $\trunczo{x} =
\min(1,\max(0,x))$. For $\eps>0$ we write $x=y\pm\eps$ to mean that
$x \in [y-\eps,y+\eps]$. Both notations extends naturally to vectors,
i.e.\ $(\trunczo{x})_i=\trunczo{x_i}$ for all $i$, and $x=y\pm\eps$ if
$x_i=y_i\pm\eps$ for all $i$.

\paragraph{Diagrams of Financial Networks} We may conveniently
represent a financial network as a labeled directed graph. The nodes
are given by the set of banks $N$. The node $i \in N$ is labeled by
the external assets $e_i$. When $c_{i,j}>0$, the graph has an arc from
$i$ to $j$ labeled by the notional $c_{i,j}$. When $c_{i,j}^k>0$, the
graph has an arc from $i$ to $j$ labeled by the notional $c_{i,j}^k$
as well as the reference bank $k$. When illustrating these graphs as
diagrams we label external assets $e_i$ in a box drawn on top of the
node of the bank~$i$. This label may be omitted when $e_i=0$. A debt
contract is drawn as a blue arc and a CDS as an orange arc, both
labeled by the notional. Instead of labeling the arc corresponding to
a CDS also by the reference bank, we connect the reference bank and the
arc by a dashed line. An example is given in
Figure~\ref{fig:example-network}. 

\begin{figure}[h!]
  \centering
  \begin{tikzpicture}
  
  \begin{scope}[every node/.style={circle, thick, draw}]
    \node (A) at (0,2) {$A$};
    \node (D) at (2,0) {$D$};
    \node (B) at (2,2) {$B$};
    \node (C) at (0,0) {$C$};
  \end{scope}
  
  \begin{scope}[ every edge/.style={draw=blue,very thick}]
    \path [->] (A) edge node[below] {$2$} (B);
    \path [->] (D) edge node[right] {$3$} (B);
  \end{scope}
  
  \begin{scope}[every edge/.style={draw=orange,very thick}]
    \path [->] (C) edge node[below] {$4$} (D);
  \end{scope}
  
  \begin{scope}[every edge/.style={draw=orange,dashed,very thick}]
    \path[-] (A) edge ($(D) !.5! (C)$);
  \end{scope}

  \begin{scope}[xshift=0.4cm, yshift=-0.4cm, every node/.style={rectangle, draw=black, fill=white}]   
    \node (eA) at (0,2) {$1$};
    \node (eB) at (2,2) {$1$};
    \node (eD) at (2,0) {$0$};
    \node (eC) at (0,0) {$4$};
  \end{scope}
\end{tikzpicture}    

  \caption{Example diagram of financial network.}
  \label{fig:example-network}
\end{figure}
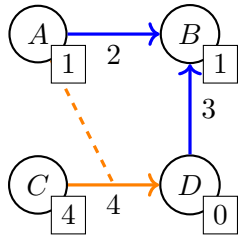

When there are no default cost, i.e.\ when $\alpha=\beta=1$, the
unique clearing recovering vector in this example is
$(r_A,r_B,r_C,r_D)=(1/2,1,1,2/3)$. Suppose now that $\alpha=1/2$ and
$\beta=2/3$. In this case the unique clearing recovering vector is instead
$(r_A,r_B,r_C,r_D)=(1/4,1,1 ,1)$. Thus introducing default costs made
bank $D$ solvent.  Note also that the value of $\beta$ is not relevant
in the example.

\paragraph{Source and Sink Banks}
When constructing financial networks it is convenient to have banks
that are sources and sinks of the associated directed graph and that
these banks are \emph{far} from default in any
circumstance~\cite{SchuldenzuckerSB2019-Complexity}. The role of these
banks are therefore exclusively to act as writers and holders,
respectively, of contracts, and we refer to these as \emph{source
  banks} and \emph{sink banks}. More precisely, we shall assume that
the assets of source and sink banks are always at least double the
amount of their liabilities. This trivially holds for sink banks and
for a source bank $s$ we can simply assume that
$e_s \geq 2 \left(\sum_{j \in N} c_{s,j}+\sum_{k \in
    N}c_{s,j}^k\right)$.  To ensure the banks satisfy the
non-degeneracy assumption of Definition~\ref{def:non-degeneracy}, we
shall in addition assume that a source bank $s$ holds a debt contract
with a sink bank $t$ of notional of~1 and we assume that a sink bank
$t$ has external assets $e_t=1$.

In diagrams of financial networks we reserve the labels $s$ and $t$
for source and sink banks, respectively, possibly with subscripts. For
the purpose of clarity in the diagrams we do not label source and sink
banks with their external assets and we do not draw the contracts
directly from source banks to sink banks used to ensure that the
non-degeneracy condition holds.

\subsection{Computation by Financial Networks}
The complexity of computational problems about financial networks
arises from the ability of the networks to perform
computation. Consider the financial network given in
Figure~\ref{fig:basic-gadget} where bank~$v$ holds a CDS with
reference~$u$ of notional~$c_{s,v}^u$ and is also the writer of a contract
of notional~1. We assume that the recovery rate of bank~$u$ is
determined by assets and liabilities not illustrated. We now have the
relationship
\begin{equation}
  r_v = \min\left(1, c_{s,v}^u (1 - r_u)\right) \enspace ,
\end{equation}
which one can see resembles the voltage transfer characteristics of an
inverter, indicating its usefulness as a basic building block.

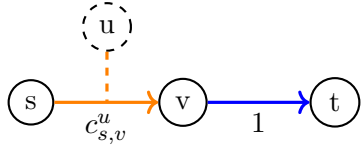
\begin{figure}[h!]
    \centering
    \begin{tikzpicture}
  
  \begin{scope}[every node/.style={circle,thick,draw}]
    \node (u) at (1,2) [dashed] {u};  no dashes yet
    \node (s) at (0,1) {s};
    \node (v) at (2,1) {v};
    \node (t) at (4,1) {t};
  \end{scope}
  
  \begin{scope}[ every edge/.style={draw=blue,very thick}]
    \path [->] (v) edge node[below] {$1$} (t);
  \end{scope}
  
  \begin{scope}[every edge/.style={draw=orange,very thick}]
    \path [->] (s) edge node[below] {$c_{s,v}^u$} (v);
  \end{scope}
  
  \begin{scope}[every edge/.style={draw=orange,dashed,very thick}]
    \path[-] (u) edge ($(s) !.5! (v)$);
  \end{scope}
\end{tikzpicture}    

    \caption{Financial network gadget with input bank $u$ and output bank $v$.}
    \label{fig:basic-gadget}
\end{figure}

Another source of computational power of the model of financial
networks comes from the fact that payments are obtained from
liabilities by \emph{multiplication} of the recovery rate. When
considering exact clearing recovery rate vectors this enables
financial networks to evaluate polynomials
(see Section~\ref{sec:compute-polys}) and general algebraic
circuits~\cite{IoannidisKV2022-Strong-Approximations}.

\paragraph{Financial Network Gadgets}
Generalizing on the example of Figure~\ref{fig:basic-gadget} we
consider small financial networks, which we refer to as financial
network \emph{gadgets}, with designated \emph{input} and \emph{output}
banks. An input bank will be the output bank of another gadget (and
hence illustrated with a dashed circle). Non-input banks will
typically have no external assets and have their recovery rates depend
functionally on the recovery rates of the input banks.

\section{\PPAD\ hardness of Approximate Clearing}
\label{sec:PPAD-hardness-approx}

It is not a priori obvious what would constitute a reasonable notion
of approximation of clearing recovery rate vectors. Ideally such an
approximation should enable a reasonable resolution of contracts. In
particular, when computing payments from approximations to recovery
rates, is is crucial that money is not missing in the system to be
able to do so. On the other hand, from the perspective of proving
computational hardness results this becomes a lesser concern, and such
results are meaningful as long as the conditions imposed by the notion
of approximation can be viewed as necessary conditions.

For approximation we focus on the case of financial networks without
default costs. A candidate notion of approximation is that of an
\eps-almost fixed point of the function $f$, i.e.\ $r \in \ncube$ such
that $\Norm{f(r)-f}_\infty \leq \eps$, and this is equivalent to the
notion of approximation considered by
Schuldenzucker~et~al.~\cite{SchuldenzuckerSB2019-Complexity}. For
non-degenerate financial networks we have that the problem of
computing \eps-almost fixed point of the function $f$ is contained in
\PPAD\ by a general result of Etessami and
Yannakakis~\cite[Proposition~2.2]{EtessamiYannakakis2010-FIXP}.

A weakness\footnote{In addition, using this definition would have
  implications for proving hardness of approximation, since it would
  not be possible to guarantee that a source bank is solvent no matter
  how many external assets it has. Our result, as well as the result
  of Ioannidis~et~al.~\cite{IoannidisKV2023-PPAD}, could be adapted to
  this weaker definition, but would result in \PPAD-hardness for
  smaller values of \eps.} of this definition is that it allows for a
bank that is very far from being in default to have a recovery rate
of~$1-\eps$, which in turn would trigger CDSs that could have
significant impact unless $\eps$ is negligible.
Ioannidis~et~al.~\cite{IoannidisKV2023-PPAD} imposed the additional
condition that a bank~$i$ for which
$e_i > \sum_{j \in N} c_{s,j}+\sum_{k \in N}c_{s,j}^k$ must have
recovery rate~$r_i=1$. This condition may be handled during a
preprocessing step, which then gives \PPAD-membership also for this
modified definition. While it is clearly justified to declare such a
bank solvent, it seems a somewhat artificial to restrict the solvency
condition to banks that are \emph{guaranteed} to be solvent due to
sufficient external assets. It arguably appears more natural to extend
the condition to say that a bank with more assets than liabilities in
a given situation must have recovery rate~$1$, i.e.\ to enforce the
condition $a_i(r) > l_i(r) \implies r_i = 1$. It is however not clear
whether $\PPAD$-membership holds with this condition, due to the sharp
transition between solvency and default. Since we aim for a reasonable
notion of approximation we shall relax the condition to only require a
recovery rate of~$1$ when a bank clearly has more assets than
liabilities. We therefore propose the following notion of approximate
clearing recovery rate vectors.
\begin{definition}[\eps-approximate clearing recovery rate vector]
  For a financial network without default costs, a recovery rate
  vector $r \in \ncube$ is an \eps-approximate clearing recovery rate
  vector if $\Norm{f(r)-r}_\infty \leq \eps$ and
  $a_i(r) \geq (1+\eps)l_i(r) \implies r_i=1$ for all $i\in N$.
\label{def:eps-approx}
\end{definition}
We shall refer to the problem of computing an \eps-approximate
clearing recovery rate vector in non-degenerate financial networks
without default costs as the \epsCDSClearing\ problem. We give the
simple proof of \PPAD-membership for \epsCDSClearing\ with respect to
Definition~\ref{def:eps-approx} in
Appendix~\ref{sec:PPAD-membership}. Let us also note that the
\PPAD-hardness result of Ioannidis~et~al.~\cite{IoannidisKV2023-PPAD}
still holds under the weaker assumption above. Our result is the
following.
\begin{theorem}
  The problem \epsCDSClearing\ is $\PPAD$-hard for
  $\eps \leq 5 - 2 \sqrt{6} \approx 0.101$.
  \label{thm:epsCDSCLearing-PPAD}  
\end{theorem}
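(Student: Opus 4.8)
The plan is to reduce from \PureCircuit, the generalized circuit problem on the domain \zob{} introduced in~\cite{DeligkasFHM22-Pure-Circuit}, following the high-level strategy of Ioannidis~et~al.~\cite{IoannidisKV2023-PPAD} but with the crucial twist advertised in the introduction: an \emph{asymmetric} decoding of interval values $[0,1]$ to the symbols $\zob$. Concretely, I would fix two thresholds $0 < \lambda < \mu < 1$ and decode a recovery rate $r$ as the bit~$0$ if $r \leq \lambda$, as the bit~$1$ if $r \geq \mu$, and as the garbage symbol $\bot$ if $\lambda < r < \mu$; by contrast the symmetric choice used in prior work forces $\lambda$ and $\mu$ to be symmetric about $1/2$, which wastes the natural "one-sidedness" of a CDS (a CDS fires precisely when its reference bank is in \emph{default}, i.e.\ has recovery rate bounded away from~$1$). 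Choosing $\lambda,\mu$ to exploit this asymmetry is what ultimately yields the bound $\eps \leq 5-2\sqrt{6}$.

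The core of the reduction is to build financial-network gadgets realizing each gate type of \PureCircuit\ --- at minimum $\NOT$, $\OR$, and the "purification" gate $\PURIFY$ --- so that, given any \eps-approximate clearing recovery rate vector (in the sense of Definition~\ref{def:eps-approx}) of the assembled network, reading off the decoded values at the output banks yields a valid solution to the \PureCircuit\ instance. The basic inverter of Figure~\ref{fig:basic-gadget}, giving $r_v = \min(1, c_{s,v}^u(1-r_u))$, is the workhorse: by choosing the notional $c_{s,v}^u$ appropriately and composing a constant number of such gadgets (together with fan-out via additional CDSs referencing the same bank, and "$\OR$" via a bank holding two debt contracts whose writers are driven by the two inputs) I would implement each gate. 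For each gadget I must verify the \emph{robustness} property: if the input banks carry correctly-decoded values with recovery rates in the appropriate sub-intervals $[0,\lambda]$ or $[\mu,1]$, then \emph{every} \eps-approximate fixed point assigns the output bank a recovery rate in the correct sub-interval (for $\NOT$/$\OR$) or, for $\PURIFY$, at least one of the two outputs is a clean bit; and crucially the "$a_i \geq (1+\eps)l_i \implies r_i = 1$" clause of Definition~\ref{def:eps-approx} must be used to rule out the degenerate situation where a bank that is comfortably solvent is nonetheless assigned a recovery rate slightly below~$1$, which is exactly where the footnote's subtlety about source banks enters.

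The whole construction reduces to a system of interval constraints in the parameters $\lambda$, $\mu$, the gadget notionals, and $\eps$; feasibility of this system for $\eps$ as large as $5-2\sqrt{6}$ is the quantitative heart of the proof. I expect the main obstacle to be the $\PURIFY$ gadget: one must design a small sub-network such that for \emph{every} approximate fixed point, not both outputs can land in the "middle" garbage zone $(\lambda,\mu)$ simultaneously, and pinning down the exact inequality relating $\lambda$, $\mu$, $\eps$ under which this holds --- while simultaneously keeping the $\NOT$ and $\OR$ gadgets valid --- is where the optimization happens. Once the interval system is set up, the optimal choice of thresholds is a routine (if slightly tedious) calculation that produces the irrational bound $5-2\sqrt{6} = (\sqrt{3}-\sqrt{2})^2 \approx 0.101$; I would present it as the solution of a small extremal problem rather than grind through it. Finally, polynomial-time computability of the reduction is immediate since the network has size linear in the \PureCircuit\ instance and all notionals are fixed rationals (or the single irrational appears only in the analysis, with rational approximations sufficing for any $\eps$ strictly below the stated threshold), and \PPAD-membership of \epsCDSClearing\ is already established in Appendix~\ref{sec:PPAD-membership}, so hardness for the stated range of $\eps$ follows.
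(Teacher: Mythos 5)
Your high-level strategy coincides with the paper's: reduce from \PureCircuit, decode recovery rates to $\zob$ asymmetrically via two thresholds, realize each gate by a small gadget built on the CDS inverter of Figure~\ref{fig:basic-gadget}, and optimize the thresholds and notionals. The genuine gap is that the quantitative heart of the theorem is exactly what you defer: you assert that the resulting system of interval constraints is feasible up to $\eps = 5-2\sqrt{6}$ and that the extremal calculation is ``routine,'' but you never set the system up, and with your gate set it would not be the same system. The paper derives the bound from a specific one-level \NAND\ gadget (a bank $w$ with liability $1$ holding two CDSs of notionals $c_1,c_2$ referencing the two inputs), whose correctness constraints force $\delta=\eps$, $c_1=c_2=\tfrac{1}{1-\gamma}$ and $\eps\leq\tfrac{\gamma(1-\gamma)}{3-\gamma}$, maximized at $\gamma=3-\sqrt{6}$; the \PURIFY\ gadget is then checked with slack at a fixed parameter $\phi=\tfrac{7}{10}$. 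In the paper the binding constraint is \NAND, not \PURIFY, so your expectation that \PURIFY\ is where the optimization happens is a further sign that the inequality analysis behind the number $5-2\sqrt{6}$ has not actually been carried out; with the basis $\{\NOT,\OR,\PURIFY\}$ the extremal problem changes and the claimed constant does not follow from anything in the proposal.

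There is also a concrete flaw in the sketched \OR\ gadget: realizing \OR\ ``via a bank holding two debt contracts whose writers are driven by the two inputs'' attaches new liabilities to the input banks, which changes their recovery rates inside the gadgets that produce them. The paper's gadgets read inputs exclusively as \emph{reference} banks of CDSs, precisely so that fan-out and gate inputs leave the input banks' balance sheets untouched; an \OR\ built correctly in this technology (e.g.\ by inverting each input and then combining) costs an extra gadget level, accumulates an extra $\pm\eps$ of error, and would alter the achievable $\eps$. The ingredients you do have right --- asymmetric thresholds, fan-out by multiple CDS references, using the clause $a_i(r)\geq(1+\eps)l_i(r)\implies r_i=1$ to pin source banks at recovery rate $1$, and invoking the appendix for \PPAD-membership --- all match the paper, but without explicit gadgets and the accompanying constraint analysis the stated bound $\eps\leq 5-2\sqrt{6}$ is unsupported.
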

While still being very far from the (trivial) upper bound of
$\eps=\frac{1}{2}$, the result is a significant quantitative
improvement upon the previous work of
Ioannidis~et~al.~\cite{IoannidisKV2023-PPAD} that obtained the same
result for $\eps\leq \frac{3-\sqrt{5}}{16} \approx 0.048$. Both of
these bounds on $\eps$ are clearly in the range of practical
relevance.

\subsection{Pure-Circuit}
To show $\PPAD$-hardness of the approximate clearing problem, we will
be reducing from the problem $\PureCircuit$, like the previous work of
Ioannidis~et~al.~\cite{IoannidisKV2023-PPAD}. The $\PureCircuit$
problem was recently introduced and proved be be complete for $\PPAD$
by Deligkas~et~al.~\cite{DeligkasFHM22-Pure-Circuit}, and with this
the same authors could greatly simplify and improve previous
$\PPAD$-hardness results for a range of problems.

The $\PureCircuit$ problem is a type of ``generalized circuit''
problem~\cite{ChenDT2009-Settling-two-player-Nash}, which means that
the instance is given as a directed graph, the \emph{circuit}, with
nodes being \emph{gates} computing a specific function from its inputs
to its outputs, but where the circuit itself does not have designated
inputs or outputs. The circuit instead has cycles, which in turn means
that the problem is essentially a constraint satisfaction problem. The
problem is however defined such that a solution is guaranteed to
exist.

The gates of a $\PureCircuit$ instance operate on the domain
$\zob$. We may view this as an abstraction of a value in the interval
$[0,1]$ where $\bot$, the ``garbage'' value, refer to any value in the
open interval $(0,1)$. The usual Boolean gates are readily extended to
the domain~\zob. We give the definition for the (complete) function
$\NAND$.
\begin{definition}
  The \NAND\ gate with inputs $u,v \in \zob$ and output $w \in \zob$
  is satisfied if and only if $(u=1 \wedge v=1 \implies w=0)$ and
  $(u=0 \vee v=0 \implies w=1)$.
  \label{def:NAND}
\end{definition}
The novel \PURIFY\ gate has two outputs and guarantees a ``pure bit''
on one of these.
\begin{definition}
  The \PURIFY\ gate with input $u \in \zob$ and outputs $v,w\in\zob$
  is satisfied if and only if $(u \neq \bot \implies v=u \wedge w=u)$ and
  $(v \neq \bot \vee w \neq \bot)$.
  \label{def:PURIFY}
\end{definition}
The \PureCircuit\ problem is \PPAD-complete for circuits with \PURIFY\
gates together with a set of functionally complete Boolean gates. In
particular we have the following.
\begin{theorem}[Deligkas~et~al.~\cite{DeligkasFHM22-Pure-Circuit}]
  \PureCircuit\ is \PPAD-complete for circuits with \NAND\ and
  \PURIFY\ gates.
\end{theorem}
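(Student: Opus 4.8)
The plan is to prove membership in \PPAD\ and \PPAD-hardness separately. For membership I would exhibit a continuous piecewise-linear function whose exact fixed points decode to solutions of the given circuit, which places \PureCircuit\ in $\LinearFIXP=\PPAD$; for hardness I would reduce from the generalized-circuit problem, using the \PURIFY\ gate to handle the one ``soft-threshold'' gate of that problem.

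For membership, introduce one coordinate $r_w\in[0,1]$ for every wire $w$ and define $F\colon[0,1]^W\to[0,1]^W$ coordinatewise, where the coordinate of a wire is a fixed piecewise-linear function of the wires feeding the gate that produces it. For a \NAND-gate with inputs $u,v$ and output $w$ I would set $F(r)_w=1-\min(r_u,r_v)$, and for a \PURIFY-gate with input $u$ and outputs $v,w$ I would set $F(r)_v=\trunczo{2r_u}$ and $F(r)_w=\trunczo{2r_u-1}$. This $F$ is polynomial-time computable and continuous piecewise linear, hence representable by a $\{+,*c,\max\}$-circuit, and by Brouwer's theorem it has a fixed point. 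Decoding a coordinate by $0\mapsto 0$, $1\mapsto 1$, and $(0,1)\mapsto\bot$, a short case analysis shows that at any exact fixed point every gate constraint of Definitions~\ref{def:NAND} and~\ref{def:PURIFY} holds: for \NAND, $r_u=r_v=1$ forces $r_w=0$ and $r_u=0$ forces $r_w=1$; for \PURIFY, $r_u=0$ forces $r_v=r_w=0$, $r_u=1$ forces $r_v=r_w=1$, and if $r_u\in(0,1)$ then $r_w=0$ in case $r_u\le\tfrac12$ while $r_v=1$ in case $r_u\ge\tfrac12$, so one of the two outputs is always a pure bit. Hence \PureCircuit\ reduces to finding an exact fixed point of a $\{+,*c,\max\}$-circuit over $[0,1]^W$, which lies in $\LinearFIXP=\PPAD$ by Etessami and Yannakakis~\cite{EtessamiYannakakis2010-FIXP}; totality is witnessed by the fixed point guaranteed by Brouwer.

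For hardness I would reduce from the generalized-circuit problem, which is \PPAD-complete~\cite{ChenDT2009-Settling-two-player-Nash}, taking its gate set to consist of constants, addition, subtraction, multiplication by a constant, copying, and comparison. I would represent each value $x\in[0,1]$ carried by a generalized-circuit wire by a constant-size bundle of \PureCircuit\ wires holding a binary encoding of $x$ to sufficient precision, where the least significant wires are allowed to be $\bot$. Each arithmetic gate then becomes a small Boolean subcircuit on these bundles; since \NAND\ is functionally complete and --- crucially --- a $\bot$ input leaves a \NAND\ output unconstrained, the \NAND-simulations of negation, conjunction, disjunction and copying behave correctly on the $\bot$-extended domain, so no Boolean gate beyond \NAND\ is required. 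The delicate gate is the comparison gate, whose output near the threshold is genuinely unspecified: I would compute from its inputs a ``sign'' wire that is $1$ when one input clearly dominates, $0$ when the other clearly does, and legitimately $\bot$ when the inputs are too close, then pass this wire through a \PURIFY-gate and combine the two outputs with \NAND-logic into a single wire that is a pure bit whenever either \PURIFY-output is and that equals the correct comparison value whenever the comparison is unambiguous; downstream bundles read this purified wire. I expect the main obstacle to be precisely this comparison/purification gadget, together with controlling how $\bot$ values can propagate through the encodings, so that \emph{every} satisfying assignment of the constructed \PureCircuit\ instance --- not merely well-behaved ones --- decodes to an approximately satisfying assignment of the generalized circuit. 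Here \PURIFY\ is indispensable: a circuit built from \NAND-gates alone is satisfied by the all-$\bot$ assignment, so without \PURIFY\ the problem is trivial, and it is exactly the guarantee that one \PURIFY-output is a genuine bit that forces a solution to carry information. Combining the membership argument with this reduction yields that \PureCircuit\ with \NAND\ and \PURIFY\ gates is \PPAD-complete.
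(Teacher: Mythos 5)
This theorem is not proved in the paper at all: it is imported verbatim from Deligkas~et~al.~\cite{DeligkasFHM22-Pure-Circuit}, so there is no internal argument to compare yours against; their hardness proof is a direct reduction from \EndOfLine\ with a purpose-built bit-extraction construction, not a simulation of generalized circuits. Your membership half is fine: the piecewise-linear update you write down is representable over $\{+,*c,\max\}$, and at any exact fixed point the decoded values satisfy both gate definitions, so $\LinearFIXP=\PPAD$ gives containment. This matches the standard route.

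The hardness half, however, has a genuine gap, and one step as stated is impossible. You propose to ``combine the two \PURIFY\ outputs with \NAND-logic into a single wire that is a pure bit whenever either \PURIFY-output is.'' Over \zob\ this cannot be done: a \NAND\ gate with one $\bot$ input has its output forced only when the other (pure) input is $0$, so an \OR-style combination forces the merged wire only when a pure input equals $1$, an AND-style one only when it equals $0$, and no single wire can be forced to be pure and correct in both polarities while the sibling \PURIFY\ output is $\bot$. This is exactly why \PURIFY\ has two outputs and why reductions using it (including the one in this paper's Section~3) must carry both branches downstream rather than merge them. The second, deeper problem is $\bot$-propagation through your binary encodings: a single $\bot$ on a low-order wire of a bundle leaves every downstream \NAND\ unconstrained, so a solution may set carry, borrow, or comparison chains to adversarial \emph{pure} values and thereby corrupt arbitrarily significant bits; your claim that the comparison wire ``equals the correct comparison value whenever the comparison is unambiguous'' therefore does not follow from the construction you describe. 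You explicitly defer this obstacle, but it is the entire difficulty of the theorem (robust encodings or duplicated computations are needed, which is what the Deligkas~et~al.\ construction supplies), so the \PPAD-hardness direction is not established by your proposal.
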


\subsection{Proof of Theorem~\ref{thm:epsCDSCLearing-PPAD}}
We make a reduction from $\PureCircuit$ with \NAND\ and \PURIFY\ gates
to \epsCDSClearing\ by constructing financial network gadgets for both types of
gates, where the inputs and outputs of the gates correspond to input
and output banks. Keeping the network topology of these gadgets fixed,
each gadget will involve a few parameters in the form of notionals of
CDSs. It turns out that the \NAND\ gate is the most restrictive
towards maximizing $\eps$, and we thus give a full analysis of our
choice of parameters, showing it is optimal.

Besides the gadget we require a way to to map a clearing recovery rate
vector to a solution of the \PureCircuit\ instance by means of a
\emph{decoding function} $\dec \colon [0,1] \to \zob$. The function
$\dec$ depends on two additional parameters $\gamma,\delta \geq 0$
satisfying $\gamma+\delta<1$ and is defined as follows.
\begin{equation}
  \dec(r) = \begin{cases}
    0    & \text{ if } r \in [0,\gamma]\\
    \bot & \text{ if } r \in (\gamma,1-\delta)\\
    1    & \text{ if } r \in [1-\delta,1]
  \end{cases} \enspace .
\end{equation}
The characteristics of a CDS, as illustrated by
Figure~\ref{fig:basic-gadget} has the consequence that it is
advantageous to have the intervals in $[0,1]$ (of length $\gamma$ and
$\delta$, respectively) that are mapped to the endpoints by $\dec$ be
\emph{asymmetric}. In particular, for our setting of parameters, we will
have $\gamma = 3-\sqrt{6} \approx 0.551$ and
$\delta=5-2\sqrt{6} \approx 0.101$, but for now we leave them unfixed.

\paragraph{\NAND\ gadget}
Our financial network gadget for the \NAND\ gate is given in
Figure~\ref{fig:NAND}, with parameters $c_1$ and $c_2$ still to be
determined. This gadget was already defined by
Schuldenzucker~et~al.~\cite{SchuldenzuckerSB2019-Complexity}, but in a
different setting and with different parameters and analysis.

\begin{figure}[h]
    \centering
    \begin{tikzpicture}
  
  \begin{scope}[every node/.style={circle,thick,draw}]
    \node (u) at (0.7,3) [dashed] {$u$};
    \node (v) at (1.7,3) [dashed] {$v$};
    
    \node (s) at (0,1) {$s$};
    \node (w) at (2.5,1) {$w$};
    \node (t) at (4,1) {$t$};
  \end{scope}
  
  \begin{scope}[ every edge/.style={draw=blue,very thick}]
    \path [->] (w) edge node[below] {$1$} (t);
  \end{scope}
  
  \begin{scope}[every edge/.style={draw=orange,very thick}]
    \path [->] (s) edge [bend left] coordinate[pos=.25] (top) node[above] {$c_1$} (w);
    \path [->] (s) edge [bend right] coordinate[pos=.75] (bot) node[below] {$c_2$} (w);
  \end{scope}
  
  \begin{scope}[every edge/.style={draw=orange,dashed,very thick}]
    \path[-] (u) edge (top);
    \path[-] (v) edge (bot);
  \end{scope}
\end{tikzpicture}    

    \caption{\NAND\ gadget}
    \label{fig:NAND}
\end{figure}

For the gadget to operate correctly we require that for any
\eps-approximate clearing recovery rate vector, the values
$\dec(r_u)$, $\dec(r_v)$, and $\dec(r_w)$ must satisfy the \NAND\
gate. Having an $\eps$-approximate solution means that
$r_w=\trunczo{(1-r_u)c_1+(1-r_v)c_2}\pm \eps$. In particular,
\begin{equation}
  \begin{aligned}
r_w &\geq \min(1,(1-r_u)c_1) - \eps \\
r_w &\geq \min(1,(1-r_v)c_2) - \eps \\
r_w &\leq (1-r_u)c_1 + (1-r_v)c_2 + \eps
  \end{aligned}
\end{equation}
To satisfy Definition~\ref{def:NAND} we must ensure that that
$r_w \geq 1-\delta$ when either $r_u \leq \gamma$ or
$r_v \leq \gamma$, and that $r_w \leq \gamma$ when both
$r_u \geq 1-\delta$ and $r_v \geq 1-\delta$. This leads to the
requirements
\begin{align}
  1-\eps & \geq 1 - \delta \label{eq:NANDreq1}\\
  (1-\gamma)c_1 - \eps & \geq 1-\delta \label{eq:NANDreq2}\\
  (1-\gamma)c_2 - \eps & \geq 1-\delta \label{eq:NANDreq3}\\
  \delta c_1 + \delta c_2 + \eps & \leq \gamma \label{eq:NANDreq4}
\end{align}
To maximize $\eps$ we see that we should minimize $\delta$, and
(\ref{eq:NANDreq1}) dictates that $\delta=\eps$. Inequality
(\ref{eq:NANDreq4}) shows that we should minimize both $c_1$ and
$c_2$, and (\ref{eq:NANDreq2}) and (\ref{eq:NANDreq3}) dictates that
$c_1=c_2=\frac{1}{1-\gamma}$. Letting (\ref{eq:NANDreq4}) hold with
equality and solving for $\epsilon$ yields
$\eps=\frac{\gamma(1-\gamma)}{3-\gamma}$. The maximum for
$\gamma \in [0,1]$ is $\eps=5-2\sqrt{6} \approx 0.101$ obtained for
$\gamma=3-\sqrt{6}\approx 0.551$, and this is a valid choice for $\gamma$ since
$\gamma+\delta = 8 - \sqrt{6} \approx 0.652 < 1$.

\paragraph{\PURIFY\ gadget} 
Our financial network gadget for the \PURIFY\ gate is given in
Figure~\ref{fig:PURIFY}, with $\gamma$ and $\eps$ specified as
above. The parameter $\phi$ is still to be specified, while we let
$\eta=\frac{1-\phi}{1-\gamma}+\varepsilon$. The gadget was introduced
by Ioannidis~et~al.~\cite{IoannidisKV2023-PPAD} with different
parameters and analysis. 

\begin{figure}[h!]
    \centering
    \begin{tikzpicture}
  
  \begin{scope}[every node/.style={circle, thick, draw}]
    \node (u) at (4,3) [dashed] {$u$};
    \node (s1) at (0,1) {$s_1$};
    \node (A) at (2,1) {$A$};
    \node (t1) at (4,1) {$t_1$};
    \node (s2) at (1,-0.4) {$s_2$};
    \node (v) at  (3,-0.4) {$v$};
    \node (t2) at (5,-0.4) {$t_2$};

    \node (s3) at (6+0,1) {$s_3$};
    \node (B) at  (6+2,1) {$B$};
    \node (t3) at (6+4,1) {$t_3$};
    \node (s4) at (6+1,-0.4) {$s_4$};
    \node (w) at  (6+3,-0.4) {$w$};
    \node (t4) at (6+5,-0.4) {$t_4$};
  \end{scope}
  
  \begin{scope}[ every edge/.style={draw=blue,very thick}]
    \path [->] (A) edge node[below] {$1$} (t1);
    \path [->] (v) edge node[below] {$1$} (t2);
    \path [->] (B) edge node[below] {$1$} (t3);
    \path [->] (w) edge node[below] {$1$} (t4);
  \end{scope}
  
  \begin{scope}[every edge/.style={draw=orange,very thick}]
    \path [->] (s1) edge node[below] {$\frac{1}{1 - \phi}$} (A);
    \path [->] (s2) edge node[below] {$\frac{1}{1 - \gamma}$} (v);
    \path [->] (s3) edge node[below] {$\frac{1}{1 - \gamma}$} (B);
    \path [->] (s4) edge node[below=3pt] {$\frac{1}{1 - \eta}$} (w);  \end{scope}
  
  \begin{scope}[every edge/.style={draw=orange,dashed,very thick}]
    \path[-] (u) edge ($(s1) !.5! (A)$);
    \path[-] (A) edge ($(s2) !.5! (v)$);
    \path[-] (u) edge ($(s3) !.5! (B)$);
    \path[-] (B) edge ($(s4) !.5! (w)$);
  \end{scope}
\end{tikzpicture}    

    \caption{\PURIFY\ gadget}
    \label{fig:PURIFY}
\end{figure}

For the gadget to operate correctly we require that for any
\eps-approximate clearing recovery rate vector, the values
$\dec(r_u)$, $\dec(r_v)$, and $\dec(r_w)$ must satisfy the \PURIFY\
gate. By Definition~\ref{def:PURIFY} this means that we must satisfy
the following conditions.
\begin{align}
  &\dec(r_u) = 0 \implies \dec(r_v)=0 \wedge \dec(r_w)=0 \label{eq:PURIFYreq0}\\
  &\dec(r_u) = 1 \implies \dec(r_v)=1 \wedge \dec(r_w)=1 \label{eq:PURIFYreq1}\\
  &\dec(r_v)\neq \bot \vee \dec(r_w)\neq \bot \label{eq:PURIFYreqbot}
\end{align}
It turns out that there is a small range of values for $\phi$ where
these holds true. Rather than making a complete analysis, we make the
arbitrarily choice of $\phi = \frac{7}{10}$ from this range, and
simply prove that this works. Having an $\eps$-approximate solution means that
\begin{align}
r_A&=\trunczo{\frac{1-r_u}{1-\phi}}\pm\eps \enspace ,&
r_v&=\trunczo{\frac{1-r_A}{1-\gamma}}\pm\eps \label{eq:PURIFY-A-v}\\
r_B&=\trunczo{\frac{1-r_u}{1-\gamma}}\pm\eps \enspace ,&
r_w&=\trunczo{\frac{1-r_B}{1-\eta}}\pm\eps \label{eq:PURIFY-B-w}
\end{align}

We start by showing that (\ref{eq:PURIFYreqbot}) holds. Let us first
assume $r_u \leq \phi$. From (\ref{eq:PURIFY-A-v}) we have that
$r_A \geq 1-\eps$ and then further that
\[
  r_v \leq \frac{\eps}{1-\gamma}+\eps = \gamma\frac{2-\gamma}{3-\gamma} \leq \gamma \enspace ,
\]
where the equality follows from our choice of $\gamma$ and $\eps$. This means that $\dec(r_v)=0$.  Suppose now that $r_u\geq \phi$. From
(\ref{eq:PURIFY-B-w}) we have that
$r_B \leq \frac{1-\phi}{1-\gamma}+\eps=\eta$ and then further that
$r_w \geq 1-\eps = 1-\delta$. This means that $\dec(r_w)=1$.

We proceed to show that (\ref{eq:PURIFYreq0}) and
(\ref{eq:PURIFYreq1}) hold. Note that since
$\gamma \leq \phi \leq 1 - \delta$, we have already shown that
$r_u \leq \gamma$ implies $r_v \leq \gamma$ and that
$r_u \geq 1- \delta$ implies $r_w \geq 1 - \delta$, which means that
$\dec(r_u) = 0$ implies $\dec(r_v)=0$ and that $\dec(r_u) = 1$ implies
$\dec(r_w)=1$. We are therefore left with two cases to consider.

Assume first that $r_u \geq 1- \delta = 1-\eps$. Then
\begin{equation}
  r_A \leq \frac{\eps}{1-\phi}+\eps =
  \frac{13}{3}\eps \leq \gamma \enspace , \label{eq:PURIFY-hard-ineq1}
\end{equation}
by the choice of $\gamma$, $\eps$, and $\phi$. It follows that
$r_v \geq 1-\eps=1-\delta$, thereby proving the remaining part of
(\ref{eq:PURIFYreq1}).  Assume next that $r_u \leq \gamma$. Then
$r_B \geq 1-\eps$ and thus
\begin{equation}
  r_w \leq \frac{\eps}{1-\eta}+\eps \leq \gamma \enspace , \label{eq:PURIFY-hard-ineq2}
\end{equation}
again by the choice of $\gamma$, $\eps$, and $\phi$, thereby proving
the remaining part of (\ref{eq:PURIFY-hard-ineq2}) as well, and
thereby completing the proof.

For clarity we evaluate the quantities in (\ref{eq:PURIFY-hard-ineq1})
and (\ref{eq:PURIFY-hard-ineq2}) numerically, and find that
$\frac{13}{3}\eps \approx 0.438$ and that
$\frac{\eps}{1-\eta}+\eps \approx 0.537$, while recalling that
$\gamma \approx 0.551$.

\paragraph{Completing the Reduction} 
Given the definition and analysis of the gadgets for \NAND\ and
\PURIFY\ gadget above, a reduction from $\PureCircuit$ to
\epsCDSClearing\ having $\eps=5-2\sqrt{6}$ is now immediate. Each gate
of the given $\PureCircuit$ instance is replaced by the corresponding
financial network gadget, and joining these together gives the
resulting financial network of the reduction. An $\eps$-approximate
clearing recovery rate vector $r$ for this financial network now
results in a solution to the $\PureCircuit$ instance given directly by
$\dec(r)$.

\section{$\ExistR$ Completeness Results}
In this section we consider two natural decision problems for
financial networks with CDSs, and show them to be $\ExistR$
complete. In the presence of default costs, a clearing recovery rate
vector is not guaranteed to exist. Define
$\CDSHasClearing{(\alpha,\beta)}$ to be the problem of deciding if a
given non-degenerate financial network with default costs $\alpha$ and
$\beta$ has a clearing recovery rate vector. The problem is clearly
contained in $\ExistR$ by the characterization of $\ExistR$ by BSS
machines~\cite{BurgisserCucker2009-Exotic-Quantifiers}, and we prove a
matching hardness result.
\begin{theorem}
  The problem $\CDSHasClearing{(\alpha,1)}$ is $\ExistR$-complete for
  all $\alpha<1$.
  \label{thm:CDSHasClearing}
\end{theorem}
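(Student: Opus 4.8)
The plan is to reduce from the $\ExistR$-hard promise problem $\FEAS(\ncube)$ of Corollary~\ref{cor:FEAScube}. Given a degree~4 polynomial $p$ in $n$ variables, guaranteed (if it has any root) to have a root in $\ncube$, I want to build a non-degenerate financial network with default cost parameters $(\alpha,1)$ that has a clearing recovery rate vector if and only if $p$ has a root. The overall architecture has two parts: (i) a \emph{polynomial-evaluation sub-network} that, given input banks encoding a point $x \in \ncube$ via their recovery rates, produces an output bank whose recovery rate encodes the (suitably normalized and truncated) value of $p(x)$; and (ii) a \emph{consistency/trigger gadget} exploiting the $\alpha<1$ default cost so that the whole network admits a clearing vector exactly when the output bank can be made to encode the value $0$, i.e.\ when $p(x)=0$ for some $x$.

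For part (i), I would invoke the machinery referenced in Section~\ref{sec:compute-polys} (``financial networks can evaluate polynomials'' via multiplication of recovery rates, as in Ioannidis~et~al.~\cite{IoannidisKV2022-Strong-Approximations}): using the inverter gadget of Figure~\ref{fig:basic-gadget} together with CDS-based multiplication and addition gadgets, one assembles, for the exact ($\alpha=\beta=1$) fragment of the network, banks whose recovery rates functionally compute $+$, $*$, and multiplication by rational constants on values in $[0,1]$, after an affine rescaling to keep all intermediate quantities inside $[0,1]$. Care is needed so that every such ``computation'' bank is non-degenerate (supply tiny external assets, or route a debt contract to a sink bank as in the source/sink convention), and so that $n$ free input banks $x_1,\dots,x_n$ can take \emph{arbitrary} values in $[0,1]$ in a clearing vector — this is arranged by giving each $x_i$ no constraints beyond $r_{x_i}\in[0,1]$, e.g.\ a self-consistent trivial gadget. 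The output is a bank $z$ with $r_z = g(x)$ for an explicit rational affine image $g$ of $p$ with $g(x)\in[0,1]$ and $g(x)=c_0 \iff p(x)=0$ for a fixed rational $c_0$.

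For part (ii), the point of $\alpha<1$ is to manufacture a \emph{non-existence} of clearing vectors unless a target equation holds. The idea is a small cycle of banks whose only consistent (fixed-point) configuration requires the recovery rate coming out of the polynomial block to equal $c_0$ exactly; if it does not, the discontinuity in $F$ together with the strict loss $\alpha<1$ on external assets of some bank in the cycle makes the update function have no fixed point (the bank is pushed just over the solvency threshold in one direction but the default branch with factor $\alpha$ drops it strictly below, and vice versa — a one-dimensional ``no fixed point'' configuration analogous to the degeneracy examples in Appendix~\ref{sec:degeneracy}, but here forced \emph{only} off the target value). Concretely I would design a bank $b$ with external assets $e_b>0$ scaled so that $a_b(r) = l_b(r)$ precisely when $r_z=c_0$; when $r_z>c_0$ the bank is strictly solvent and forces some downstream value that feeds back to raise $r_z$ (contradiction), and when $r_z<c_0$ it defaults, and the $\alpha$-loss makes $a_b'(r)/l_b(r)$ jump strictly below what consistency demands, again with no fixed point. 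Tuning this so the \emph{only} surviving fixed points are those with $p(x)=0$ (and so that such a fixed point genuinely exists whenever $p$ has a root in $\ncube$, using the promise) is the crux.

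The main obstacle I expect is exactly this last tuning: proving the biconditional cleanly. The ``if'' direction (root $\Rightarrow$ clearing vector exists) requires checking that setting the input banks to a root $x^\*\in\ncube$ and propagating values gives a genuine fixed point of $F$ — straightforward once the gadgets are verified, using $\alpha=\beta=1$-style continuity everywhere except at bank $b$, which sits exactly at its threshold and can be assigned $r_b=1$. The ``only if'' direction (clearing vector $\Rightarrow$ root) is harder: I must argue that in \emph{any} clearing vector the computation banks really do compute $g$ of the input values (this needs the gadgets to behave functionally, i.e.\ not have spurious fixed points, which is where the inverter's truncation and the non-degeneracy bounds $\max(a_i,l_i)\ge$ const are used), and then that bank $b$'s fixed-point condition forces $r_z=c_0$ hence $p(x)=0$. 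Handling the truncation $\trunczo{\cdot}$ in the affine-rescaled polynomial evaluation — making sure no intermediate value is clipped when $x\in\ncube$ — is the fiddly but routine part; the conceptual weight is in the $\alpha<1$ trigger gadget, and I would present that gadget and its one-dimensional fixed-point analysis as a separate lemma before assembling the reduction.
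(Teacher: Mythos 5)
Your high-level skeleton matches the paper's: reduce from $\FEAS(\ncube)$, evaluate the polynomial inside the network (the paper's Proposition~\ref{prop:polygadget}, which outputs a bank with recovery rate exactly $\abs{p(x)}$), and then use $\alpha<1$ to turn ``$p(x)=0$?'' into existence versus non-existence of a clearing vector. However, the part you yourself flag as the crux --- the $\alpha<1$ trigger --- is not just missing detail; as sketched it does not work. Your single bank $b$ has a \emph{one-sided} solvency test: by Definition~\ref{def:Clearing}, $r_b=1$ whenever $a_b(r)\geq l_b(r)$, so $b$ behaves identically at $r_z=c_0$ and at every $r_z$ on the solvent side of the threshold, and no downstream construction can separate these two cases through $b$. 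Your proposed fix for that side, a feedback loop that ``raises $r_z$'', is incoherent: $r_z$ is functionally determined by the free inputs $x$ through the evaluation network, so it cannot be pushed by feedback without destroying the one-to-one correspondence between clearing vectors and $\ncube$ that the ``if'' direction needs. The paper sidesteps the equality-test problem entirely by making the target value the \emph{endpoint} $0$ (computing $\abs{p(x)}$ rather than an affine encoding with interior $c_0$): in the discontinuity gadget (Lemma~\ref{lem:discont}) the CDS pays the full notional only when $r_u=0$, so the bank defaults for \emph{every} $r_u>0$, and the loss $(1-\alpha)e_v$ on its external assets converts an arbitrarily small positive $\abs{p(x)}$ into a jump from $r_v=1$ down to $r_v\leq(1+\alpha)/2$. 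With an interior $c_0$ you would need two threshold banks and an AND, which you have not described, and you would still face the same problems.

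There is a second gap: even granting a jump at the threshold, the resulting gap is only of size $\Theta(1-\alpha)$, which vanishes as $\alpha\to 1$, while the non-existence mechanism must be a concrete \emph{non-degenerate} gadget with constant-separated thresholds --- you cannot reuse the degenerate no-fixed-point example of Appendix~\ref{sec:degeneracy}, because $\CDSHasClearing$ is defined only for non-degenerate networks. The paper handles this with two ingredients you omit: $k=O(\log\frac{1}{1-\alpha})$ repeated squarings (multiplication gadgets) to amplify $(1+\alpha)/2$ below $\tfrac14$ while leaving $1$ fixed, and the infeasibility gadget of Schuldenzucker~et~al.\ (Lemma~\ref{lem:infeas}), a non-degenerate cyclic gadget whose analysis genuinely uses $\alpha<1$, which admits a clearing vector when its input is $\geq\tfrac34$ and admits none when it is $\leq\tfrac14$. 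Without an explicit gadget of this kind, and an argument that every clearing vector of the assembled network restricts to a consistent evaluation of $p$ (which Proposition~\ref{prop:polygadget} supplies in the paper), the biconditional you need is not established.
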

Schuldenzucker~et~al.~\cite[Theorem~1]{SchuldenzuckerSB2019-Complexity}
proved \NP-hardness when either $\alpha<1$ or $\beta<1$ that holds a
\emph{gap} version of the problem, which is the appropriate hardness
notion in relation to $\eps$-approximate clearing. Our result
addresses the setting of exact clearing, and here strengthens the
result of \NP-hardness to \ExistR-hardness in the case of $\alpha<1$,
thereby completely settling the complexity of the problem. We leave
the problem of determining the precise computational complexity for
the case of $\beta<1$ as an interesting open problem. We note that
when $\beta<1$ the input gadget and the arithmetic gadgets break.

In the case of no default costs, although a clearing recovery rate
vector is guaranteed to exist, it will typically not be unique. In
such cases a regulator might have interest in the default or solvency
of a particular bank or set of banks. Define \CDSCanDefault\ to be the
problem of deciding if there exist a clearing recovery rate vector for
which a given bank defaults, and similarly define \CDSCanSurvive\ to
be the problem of deciding if there exist a clearing recovery rate
vector for which a given bank remains solvent. Both problems are again
clearly in \ExistR, and we complement this with a matching hardness
result for the latter problem.
\begin{theorem}
  The problem $\CDSCanSurvive$ is $\ExistR$-complete.
  \label{thm:CDSCanSurvive}
\end{theorem}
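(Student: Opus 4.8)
The plan is to reduce from $\FEAS(\ncube)$, which is $\ExistR$-hard by Corollary~\ref{cor:FEAScube}. Given a degree-$4$ polynomial $p$ in variables $x_1,\dots,x_n$, with the promise that if $p$ has a root then it has one in $\ncube$, I would build a non-degenerate financial network (without default costs, so $\alpha=\beta=1$) together with a distinguished bank $q$ such that $q$ can be solvent in some clearing recovery rate vector if and only if $p$ has a root in $\ncube$. The two ingredients are: (i) \emph{input gadgets} that, in a clearing vector, let the recovery rates $r_{x_1},\dots,r_{x_n}$ of $n$ designated banks take \emph{arbitrary} values in $[0,1]$ — this is what the basic CDS/inverter gadget of Figure~\ref{fig:basic-gadget} buys us, since feeding such a bank into an inverter-type construction leaves its value unconstrained; and (ii) \emph{arithmetic gadgets} that, exploiting the multiplication $p_{i,j}(r)=r_i l_{i,j}(r)$ of recovery rate by liability, compute addition, multiplication by constants, and multiplication of two recovery rates, so that one can evaluate the polynomial $p$ (suitably affinely rescaled so all intermediate values stay in $[0,1]$) at the point $(r_{x_1},\dots,r_{x_n})$. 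These are exactly the polynomial-evaluation gadgets alluded to in Section~\ref{sec:compute-polys} and used in the $\FIXP$-completeness proof of Ioannidis~et~al.~\cite{IoannidisKV2022-Strong-Approximations}; I would invoke or lightly adapt them rather than rebuild them.

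The reduction then works as follows. First rescale: pick an affine map so that $\tilde p(x) = 0 \iff p(x)=0$ and $\tilde p(\ncube)\subseteq[0,1]$, and moreover $\tilde p$ attains a value bounded away from $0$ — say $\tilde p(x)\ge c_0>0$ when $p(x)\ne 0$ on $\ncube$ — which is possible since $p$ has integer/rational coefficients and bounded degree, so the set of roots is either empty or a real variety and $|p|$ on the compact set $\ncube\setminus(\text{neighborhoods of roots})$ is bounded below; more simply, since we only need existence, replace $\tilde p$ by $1-\tilde p$ or similar so that ``$p$ has a root'' becomes ``some reachable recovery rate equals $1$''. Concretely, build the network so that a bank $z$ has recovery rate $r_z = 1 - \tilde p(r_{x_1},\dots,r_{x_n})$ (clamped to $[0,1]$), and then route $z$ into the survival bank $q$ so that $q$ is solvent precisely when $r_z = 1$, i.e.\ precisely when $\tilde p$ evaluates to $0$. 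Making $q$ solvent-iff-$r_z=1$ is a thresholding step: give $q$ external assets and a single debt obligation of notional $1$, plus an incoming CDS of large notional with reference $z$, so that $q$'s assets meet its liabilities exactly when $r_z=1$ and fall short otherwise; here the sharp transition in the definition of $F$ is what we want, and non-degeneracy is easily arranged by Definition~\ref{def:non-degeneracy} (every bank gets a token external asset or token debt to a sink). Conversely, any root $x^\ast\in\ncube$ of $p$ yields, via the input gadgets, a clearing vector with $r_{x_i}=x^\ast_i$, hence $r_z=1$ and $q$ solvent; and any clearing vector making $q$ solvent forces $r_z=1$, hence $\tilde p(r_{x_1},\dots,r_{x_n})=0$, exhibiting a root. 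Membership in $\ExistR$ is immediate: ``there exists a clearing recovery rate vector (a fixed point of the piecewise-rational $F$, expressible as a first-order existential formula over the reals with the case split encoded by sign conditions) in which $r_q=1$'' is an $\ExistR$ sentence.

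The main obstacle I expect is \textbf{keeping all intermediate arithmetic values inside $[0,1]$ while still faithfully evaluating a degree-$4$ polynomial with arbitrary rational coefficients}: multiplication gadgets only multiply numbers already in $[0,1]$ (so products only shrink), addition can overflow past $1$, and negative coefficients cannot be represented directly by recovery rates. The standard fix — which I would carry out carefully — is to precompute, using the degree bound and a bound on the coefficients of $p$, a constant $M$ such that $q(x) := (p(x) + M)/(2M)$ maps $\ncube$ into, say, $[\tfrac14,\tfrac34]$, and then evaluate $p$ via a Horner-like scheme in which every node computes an affine combination $\alpha \cdot (\text{product of two in-range values}) + \text{const}$ that provably stays in $[0,1]$; representing the needed affine shifts and the constant multipliers uses source banks with large external assets (which are guaranteed solvent, so contribute predictable payments) exactly as in the arithmetic gadgets of~\cite{IoannidisKV2022-Strong-Approximations}. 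A secondary point to verify is that the input gadget genuinely imposes \emph{no} constraint on $r_{x_i}$: this follows because in Figure~\ref{fig:basic-gadget}-type gadgets the output bank's recovery rate is determined by the input's, not vice versa, so a free input bank (one that is itself only an output of such a gadget, or has its constraint satisfied for every value in $[0,1]$) can clear at any value — I would state this as a small lemma and check that wiring $x_i$ through a double inverter returns to an unconstrained value.
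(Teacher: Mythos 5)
Your overall strategy (reduce from $\FEAS(\ncube)$, free-variable input gadgets, arithmetic gadgets to evaluate $p$, then a bank that is solvent iff the computed value witnesses a root) is the same as the paper's, and your membership argument is fine. But there is a genuine gap in the step that converts ``$p(x)=0$'' into a solvency condition, and it is exactly the point where negative values must be handled. An affine change of the output, as you first propose, cannot achieve ``$\tilde p(x)=0 \iff p(x)=0$'' together with $\tilde p(\ncube)\subseteq[0,1]$ when $p$ takes both signs: preserving the root value forces a pure scaling, which leaves negative values negative. Your fallback, $q(x)=(p(x)+M)/(2M)\in[\tfrac14,\tfrac34]$, is implementable, but it relocates the root condition to the \emph{interior} value $\tfrac12$, and the solvency mechanism you then invoke only detects threshold events at the extremes (a recovery rate being exactly $1$, or exactly $0$ via a CDS); you never say how ``$q=\tfrac12$ exactly'' becomes such an event. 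The missing idea is the paper's absolute-value trick: split $p=p^+-p^-$ into the monomials with positive and negative coefficients and compute $\abs{p(x)}=\trunczo{p^+(x)-p^-(x)}+\trunczo{p^-(x)-p^+(x)}$ with truncated-difference and sum gadgets (Proposition~\ref{prop:polygadget}); this quantity is nonnegative, stays in $[0,1]$ after the harmless $1/s$ coefficient scaling, and vanishes exactly at roots, so it can be fed to a bank $A$ and used as the CDS reference for the survival bank, which then is solvent iff $\abs{p(x)}=0$. (Equivalently one could compute $\abs{q-\tfrac12}$, but some such truncation-based absolute value is unavoidable, and your proposal never supplies it.)

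A second, more local error: your thresholding gadget is wired backwards. If the survival bank $q$ holds a CDS with reference $z$ where $r_z=1-\tilde p$, the CDS pays $(1-r_z)\cdot c$, which is $0$ when $r_z=1$ and \emph{positive} when $r_z<1$; since incoming payments only increase assets, a bank that is solvent when $r_z=1$ is solvent for all $r_z$, so ``assets meet liabilities exactly when $r_z=1$ and fall short otherwise'' cannot be realized this way. The fix is either to reference a bank carrying the value $\abs{p(x)}$ itself (as in the paper, where the CDS pays $1-\abs{p(x)}$ against a liability of $1$), or to have $z$ write a debt contract of notional $1$ to $q$ so that $q$ receives $r_z$ and is solvent iff $r_z=1$. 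Your claim that the basic inverter of Figure~\ref{fig:basic-gadget} ``leaves the input unconstrained'' also needs the small cycle construction you sketch (a double-inverter loop, or the paper's two-bank debt cycle of Figure~\ref{fig:Input}); a reference bank with no liabilities is forced to recovery rate $1$, so freeness does not come for free, though your proposed loop does work once stated precisely.
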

Schuldenzucker~et~al.~\cite[Theorem~1]{SchuldenzuckerSB2019-Complexity}
proved that gap versions of both problems are \NP-hard. For exact
clearing we settle the complexity of the problem \CDSCanSurvive\ and
leaving the the precise computational complexity of the
problem~\CDSCanDefault\ as another interesting open problem.

\subsection{Evaluating Polynomials in Financial Networks}
\label{sec:compute-polys}
In order to prove \ExistR-hardness by reduction from $\FEAS(\ncube)$
we shall build financial networks for evaluating a given
polynomial. Let $p$ be a degree~$4$ polynomial in $n$ variables
with~$s \geq 1$ monomials. By scaling we may, without loss of
generality, assume that all coefficients are bounded in magnitude by
$1/s$. Let us split $p$ into two polynomials $p^+$ and $p^-$, letting
$p^+$ be the sum of the monomials of $p$ with positive coefficients
and $p^-$ be the sum of the monomials of $p$ with negative
coefficients, but having the sign of its coefficients flipped. Thus
$p^+$ and $p^-$ are polynomials with no negative coefficients such
that $p(x)=p^+(x)-p^-(x)$. Also, by assumption
$p^+(x),p^-(x) \in [0,1]$ for all $x \in \ncube$.

Inputs ranging over $[0,1]$ are modeled by the \emph{input gadget}
illustrated in Figure~\ref{fig:Input}, having no inputs and having
single bank~$u$. A clearing recovery rate vector is $r_u=r_x$ for any
$r_x \in [0,1]$, thereby modeling the \emph{free variable}
$x \in [0,1]$.  This simple financial network was given already by
Eisenberg and Noe~\cite[Appendix~2]{EisenbergNoe2001-Systemic-Risk} as
an example with multiple clearing recovery rate vectors. One could
argue that for this particular financial network, the only reasonable
clearing recovery rate vector would be $r_u=r_x=1$. But on the other
hand it seems quite difficult to exclude more complicated financial
networks exhibiting a similar behavior in a natural way, or even just
in an algorithmically efficient way, and we therefore find it
acceptable as a gadget.
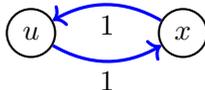
\begin{figure}[h!]
  \centering
  \begin{tikzpicture}
  
  \begin{scope}[every node/.style={circle,thick,draw}]
    \node (u) at (0,0) {$u$};
    \node (x) at (2, 0) {$x$};
  \end{scope}
  
  \begin{scope}[ every edge/.style={draw=blue,very thick}]
    \path [->] (x) edge[bend right] node[below] {$1$} (u);
    \path [->] (u) edge[bend right] node[below] {$1$} (x);
  \end{scope}
\end{tikzpicture} 

  \caption{Input gadget.}
  \label{fig:Input}
\end{figure}
The previous works of
Schuldenzucker~et~al.~\cite{SchuldenzuckerSB2019-Complexity} and
Ioannidis~et~al.~\cite{IoannidisKV2022-Strong-Approximations}
constructed financial network gadgets for arithmetic operations. In
particular we require a \emph{constant gadget} with the fixed output
$\zeta$, for any $\zeta \in [0,1]$, a \emph{sum gadget} computing
$\trunczo{u+v}$, a \emph{difference gadget} computing $\trunczo{u-v}$,
and a \emph{multiplication gadget} computing $u\cdot v$. We
furthermore require that these operate correctly in the setting of
$\alpha<1$. A multiplication gadget is given by
Ioannidis~et~al.~\cite{IoannidisKV2022-Strong-Approximations} and all
other gadgets are given by
Schuldenzucker~et~al.~\cite{SchuldenzuckerSB2019-Complexity}. For
completeness we describe the gadgets we use in
Appendix~\ref{sec:arithmetic-gadgets}. Combining all these we have the
following.
\begin{proposition}
  There is a polynomial time algorithm that given a degree~$4$
  polynomial $p$ in $n$ variables with $s\geq 1$ monomials and having
  coefficients of magnitude at most $1/s$, computes a financial
  network with an output gate $o$ such there is a one to one
  correspondence between clearing recovery rate vectors and the set
  $\ncube$. Furthermore the clearing recovery rate vector
  corresponding to $x \in \ncube$ has $r_o = \abs{p(x)}$.
    \label{prop:polygadget}  
\end{proposition}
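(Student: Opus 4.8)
The plan is to build the financial network compositionally from the elementary gadgets listed above, following the structure of the decomposition $p = p^+ - p^-$ already set up in the text. First I would recall (from Appendix~\ref{sec:arithmetic-gadgets}) the precise behaviour of the four primitive gadgets: the constant gadget with fixed output $\zeta \in [0,1]$, the sum gadget computing $r_o = \trunczo{r_u + r_v}$, the difference gadget computing $r_o = \trunczo{r_u - r_v}$, and the multiplication gadget computing $r_o = r_u \cdot r_v$, each correct also when $\alpha < 1$ (so that the $\ExistR$-hardness reduction in the $\CDSHasClearing{(\alpha,1)}$ setting goes through with the same network). The key point to invoke is that each of these gadgets, when composed, introduces only internal banks whose recovery rates are uniquely determined as functions of the recovery rates of the gadget's input banks; hence the only ``free'' banks in the assembled network are the $n$ copies of the input gadget of Figure~\ref{fig:Input}, whose recovery rates range freely and independently over $[0,1]$.

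Next I would describe the assembly. Since $p^+$ and $p^-$ each have at most $s$ monomials, each a product of at most $4$ of the $n$ variables with a nonnegative coefficient of magnitude at most $1/s$, I evaluate each monomial $\zeta_m \cdot x_{i_1} x_{i_2} x_{i_3} x_{i_4}$ using a constant gadget for $\zeta_m$ and at most four multiplication gadgets chained together; each partial product stays in $[0,1]$, so no gadget is driven out of its operating range. I then sum the (at most $s$) monomial outputs of $p^+$ with a balanced binary tree of sum gadgets; because the true value $p^+(x) \in [0,1]$ and the partial sums are all bounded by $p^+(x) \le 1$, the truncation in the sum gadget is never active and the tree computes exactly $r = p^+(x)$. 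I do the same for $p^-$, obtaining a bank with recovery rate $p^-(x)$. Finally I feed these two outputs into two difference gadgets, one computing $\trunczo{p^+(x) - p^-(x)}$ and one computing $\trunczo{p^-(x) - p^+(x)}$, and sum the two results with one more sum gadget; since exactly one of $p^+(x)-p^-(x)$ and $p^-(x)-p^+(x)$ is nonnegative (and the other truncates to $0$), this yields $r_o = \abs{p^+(x) - p^-(x)} = \abs{p(x)}$. The whole construction has size polynomial in $n$ and $s$ and is computable in polynomial time, and by prepending a source-bank/sink-bank contract of notional $1$ to each bank with no outgoing debt we ensure non-degeneracy as in Definition~\ref{def:non-degeneracy}.

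For the correspondence claim I would argue in two directions. Given any $x \in \ncube$, set each input gadget's recovery rate to $r_{x_i} = x_i$; propagating through the gadgets (each of which has its outputs uniquely pinned by its inputs) yields a recovery rate for every bank, and one checks this global assignment is a fixed point of $F$, i.e.\ a clearing recovery rate vector, with $r_o = \abs{p(x)}$. Conversely, given any clearing recovery rate vector $r$ of the assembled network, the restriction of $r$ to the $n$ input banks gives a point $x \in \ncube$, and the uniqueness-of-propagation property forces the rest of $r$ to coincide with the vector constructed from that $x$; hence the map $x \mapsto r$ is a bijection between $\ncube$ and the set of clearing recovery rate vectors. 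The main obstacle, and the place requiring genuine care rather than bookkeeping, is verifying that no truncation in a sum gadget is ever triggered and that the chained multiplication gadgets never push an intermediate value outside $[0,1]$: this is exactly where the normalisation ``coefficients of magnitude at most $1/s$'' and ``$p^+(x), p^-(x) \in [0,1]$'' is used, and where one must be careful to sum the monomials in a way (e.g.\ the balanced tree, or equivalently scaling each monomial before summing) that keeps every partial sum bounded by $1$. Everything else reduces to citing the correctness of the individual gadgets from Appendix~\ref{sec:arithmetic-gadgets} and the fact that gadget composition preserves both correctness and the non-degeneracy assumption.
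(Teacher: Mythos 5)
Your proposal is correct and follows essentially the same construction as the paper's proof: input gadgets for the variables, constant and multiplication gadgets for the monomials, sum gadgets for $p^+(x)$ and $p^-(x)$, and the identity $\abs{p(x)} = \trunczo{p^+(x)-p^-(x)} + \trunczo{p^-(x)-p^+(x)}$ via two difference gadgets and a final sum. The extra details you supply (the bijection via unique propagation from the input banks, and that truncation never bites because all monomials are nonnegative with coefficients at most $1/s$) are left implicit in the paper but are exactly the right justifications.
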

\begin{proof}
  We use $n$~copies of the input gadget, one for each variable~$x_i$,
  $i=1,\dots,n$. Next we compute the polynomials $p^+$ and $p^-$ on
  input $x$ as follows.  Each coefficient of these is computed by a
  constant gadget. All monomials can then be computed using
  multiplication gadgets, and using addition gadgets we compute
  $p^+(x)$ and $p^-(x)$. Using difference gadgets we compute
  $\trunczo{p^+(x)-p^-(x)}$ and $\trunczo{p^-(x)-p^+(x)}$ and by a
  final addition gate we compute
  $\abs{p(x)} = \trunczo{p^+(x)-p^-(x)} + \trunczo{p^-(x)-p^+(x)}$.
\end{proof}

\subsection{Proof of Theorem~\ref{thm:CDSHasClearing}}

For our reduction we need a special case of an \emph{infeasibility
  gadget} by
Schuldenzucker~et~al.~\cite{SchuldenzuckerSB2019-Complexity}. We state its
properties here and for completeness describe the gadget in
Appendix~\ref{sec:infeasibility}.
\begin{lemma}[Infeasibility gadget
  {\cite[Lemma~5]{SchuldenzuckerSB2019-Complexity}}]
  \label{lem:infeas}
  There exists a financial network gadget with input bank $u$, such that
  \begin{itemize}
  \item If $r_u \geq \frac{3}{4}$, a clearing recovery rate vector exists.
  \item If $r_u \leq \frac{1}{4}$, no clearing recovery rate vector
    exists.
  \end{itemize}
\end{lemma}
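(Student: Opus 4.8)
The plan is to reconstruct the infeasibility gadget of Schuldenzucker~et~al.~\cite[Lemma~5]{SchuldenzuckerSB2019-Complexity}. I would build it in two layers: a small \emph{core} sub-network realising the default-ambiguity phenomenon of~\cite{SchuldenzuckerSSB2020-Default-Ambiguity} — a bank that is forced to be solvent if and only if it is in default — together with an input-controlled \emph{switch}, assembled from a single copy of the inverter gadget of Figure~\ref{fig:basic-gadget}, that neutralises the core exactly when $r_u$ is large. The conceptual point to keep in view is that the default cost $\alpha<1$ is essential here: for $\alpha=\beta=1$ every non-degenerate network has a clearing vector, namely a fixed point of the continuous function $f$ of~\eqref{eq:f-function}, so the core must genuinely exploit $\alpha<1$, and it must do so robustly over an entire range of $r_u$.

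For the core I would use banks $z$ and $y$ (plus the customary source and sink banks). Bank $z$ has external assets $e_z=1$, writes a debt of $1$ to a sink, and writes a CDS referencing $y$ of notional $d=1$ to a sink; bank $y$ writes a debt of $1$ to a sink and holds a CDS of notional $c$ written by a source and referencing $z$, so that in any fixed point $r_y=\trunczo{(1-r_z)c}$ (by the computation behind Figure~\ref{fig:basic-gadget}), while $l_z=1+(1-r_y)d$ and — ignoring the switch for the moment — $a_z=1$. First I would rule out $z$ solvent: $r_z=1$ forces $r_y=\trunczo{0}=0$, hence $l_z=1+d>1=a_z$, a contradiction. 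Then, with $z$ in default, I would split on the truncation in $r_y=\trunczo{(1-r_z)c}$: if $(1-r_z)c\ge 1$ then $r_y=1$ and $l_z=1=a_z$, so $z$ is solvent after all; and if $(1-r_z)c<1$ then the fixed-point equation $r_z=\alpha e_z/l_z$ becomes the quadratic $q(r_z)\coloneqq dc\,r_z^2+(e_z+d-dc)r_z-\alpha e_z=0$, whose relevant (positive) root must lie in $(1-1/c,1)$. I would then compute $q(1-1/c)=e_z(1-1/c-\alpha)$ and choose $c\ge\frac{1}{1-\alpha}$ (possible since $\alpha<1$), making $q(1-1/c)\ge 0$; since $q$ opens upward and has a unique positive root, that root is $\le 1-1/c$, so the required interval is empty. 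Hence the core admits no fixed point at all.

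To bring in the input bank $u$ I would add one inverter gadget producing a bank $\bar\sigma$ with $r_{\bar\sigma}=\trunczo{\tfrac{4}{3}(1-r_u)}$, and give $z$ one additional CDS, written by a source and referencing $\bar\sigma$, of large notional $M$ (e.g.\ $M\ge\tfrac32 d$), so that $a_z=e_z+(1-r_{\bar\sigma})M$. When $r_u\le\tfrac14$ we get $r_{\bar\sigma}=1$, the extra term vanishes, $a_z=e_z$, and the core argument above applies verbatim, so no clearing recovery rate vector exists. When $r_u\ge\tfrac34$ we get $r_{\bar\sigma}\le\tfrac13$, hence $a_z\ge e_z+\tfrac23 M\ge e_z+d\ge l_z$ for every admissible $r_y$; then $z$ is solvent with $r_z=1$, which makes $r_y=0$ consistent, and assigning the inverter bank, the sources and the sinks their forced values produces an explicit vector that one checks to be a fixed point of $F$. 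Non-degeneracy holds throughout, since every non-source, non-sink bank writes a unit debt to a sink, and the source/sink conventions handle the rest.

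I expect the main obstacle to be purely the one identity $q(1-1/c)=e_z(1-1/c-\alpha)$, which telescopes cleanly after substituting $x_0=(c-1)/c$ and is exactly what pins the threshold $c\ge\frac{1}{1-\alpha}$ that makes non-existence robust rather than accidental; everything else is routine verification of the stated fixed-point relations and the bookkeeping of the constantly many banks, which I would carry out in Appendix~\ref{sec:infeasibility}.
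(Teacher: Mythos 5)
Your construction is correct, but it proves the lemma by a genuinely different gadget than the paper's. The paper follows Schuldenzucker~et~al.\ and builds the infeasibility gadget (Figure~\ref{fig:infeas}) from a \emph{cut-off} gadget (Lemma~\ref{lem:cutoff}, with $\alpha$-dependent thresholds $K=\tfrac{3\alpha+1}{4}$, $L=\tfrac{\alpha+3}{4}$) and an \emph{OR} gadget: a bank $B$ with external assets $\tfrac45$ and a unit liability holds a CDS of notional $\tfrac45$ referencing $A$, where $A=\mathsf{OR}(u,C)$ and $C$ is the cut-off of $r_B$ itself; the contradiction when $r_u\leq\tfrac14$ is the discrete solvency paradox ($B$ solvent $\Rightarrow$ $A=1$ $\Rightarrow$ CDS pays nothing $\Rightarrow$ $B$ defaults, and conversely), while $r_u\geq\tfrac34$ forces $A=1$ through the OR gate and breaks the cycle, yielding the explicit vector $r_B=\tfrac{4\alpha}{5}$, $r_C=0$. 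You instead build a two-bank algebraic core $z,y$ with $r_y=\trunczo{(1-r_z)c}$ and show that the default equation for $z$ reduces to a quadratic with no root in the feasible interval once $c\geq\tfrac{1}{1-\alpha}$, and you control the core not through an OR/cut-off loop but by flooding $z$ with assets via one inverter and a large-notional CDS referencing $\bar\sigma$. Your case analysis checks out (including the boundary $(1-r_z)c=1$, which lands in the solvent branch), and the switch thresholds $\tfrac43$, $M\geq\tfrac32$ do give $r_{\bar\sigma}=1$ exactly when $r_u\leq\tfrac14$ and $a_z\geq l_z$ whenever $r_u\geq\tfrac34$, so both bullets of the lemma hold; non-degeneracy and the $\beta=1$ assumption are used exactly as in the paper. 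What each buys: your gadget is self-contained and smaller (it avoids the cut-off and OR machinery entirely, replacing the threshold feedback loop by a direct non-solvability argument in the reals), whereas the paper's version reuses the published Lemma~5 gadget and keeps the two cases transparent via pure threshold reasoning without solving any equation. Two cosmetic points only: your linear coefficient $(e_z+d-dc)$ should really read $(\ell+d-dc)$ where $\ell$ is $z$'s debt notional (they coincide only because both equal $1$), and at $\alpha=0$ the phrase ``unique positive root'' is slightly off (the roots are $0$ and $1-\tfrac2c$), though the conclusion that no root lies in $(1-\tfrac1c,1)$ still holds.
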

To show $\ExistR$ hardness of \CDSHasClearing\ we give a
reduction from $\FEAS(\ncube)$. Following
Proposition~\ref{prop:polygadget} we build a financial network with
input gadgets selecting $x \in \ncube$ followed by computation of
$\abs{p(x)}$. We wish to distinguish between the two cases,
$\abs{p(x)}=0$ and $\abs{p(x)}>0$, by existence of a clearing recovery
rate vector. We create a ``gap'' between these cases using a
discontinuity gadget.
\begin{lemma}[Discontinuity gadget]
\label{lem:discont}
There exists a financial network gadget with input bank $u$ and output bank $v$, such that $r_v=1$ if $r_u=0$ and $r_v=(\alpha+1-r_u)/2$ if $r_u>0$.
\end{lemma}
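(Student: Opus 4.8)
The plan is to build a tiny financial network with a sharp threshold at $r_u=0$, exploiting the discontinuity already present in the definition of $F$ (the ``$\geq$'' versus ``$<$'' split in~\eqref{eq:F-function}) together with default costs $\alpha<1$. The core idea: have a single bank $v$ whose incoming assets are engineered so that when $r_u=0$ the bank is \emph{exactly} solvent (assets equal liabilities, so $F(v)=1$ by the first case), but when $r_u>0$ the bank is pushed strictly into default, so $F(v)=a'_v/l_v$, and the default-cost parameter $\alpha$ then enters the formula and produces the value $(\alpha+1-r_u)/2$. Concretely, I would give $v$ a debt contract of notional~1 that it writes (so $l_v=1$ when solvent), and arrange its assets to be $1$ when $r_u=0$ and to drop below $1$ (indeed to become $1-r_u$ on the relevant scale) as soon as $r_u>0$; the factor $1/2$ and the additive $\alpha$ arise from splitting the incoming assets into an ``external'' part of size $1$ contributing $\alpha e_v = \alpha$ under default costs and a ``CDS'' part contributing the term $1-r_u$, then normalising by $l_v=2$. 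So I would set, roughly, $v$ to write two unit debt contracts (total liability $2$ when solvent), hold external assets $e_v=1$, and hold a CDS with reference $u$ (and some solvent source as writer) of notional $1$ so that its payment into $v$ is $(1-r_u)\cdot 1$; then $a_v(r)=1+(1-r_u)$, which equals $2=l_v$ precisely when $r_u=0$ and is $<2$ when $r_u>0$.

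With that topology, the case analysis is immediate. When $r_u=0$: $a_v(r)=2=l_v(r)$, so the first branch of~\eqref{eq:F-function} applies and $F(r)_v=1$, forcing $r_v=1$ in any clearing vector. When $r_u>0$: $a_v(r)=2-r_u<2=l_v(r)$, so the second branch applies and $r_v=F(r)_v=a'_v(r)/l_v(r)=(\alpha e_v+\beta\cdot(1-r_u))/2=(\alpha+(1-r_u))/2$, using $\beta=1$ (consistent with the $\alpha<1$, $\beta=1$ regime of Theorem~\ref{thm:CDSHasClearing}); this is exactly $(\alpha+1-r_u)/2$ as claimed. I would also check non-degeneracy (satisfied since $e_v=1>0$, and the source banks carry the usual sink-contract of notional~1) and that the source bank writing the CDS stays solvent — this holds by the source-bank convention that its external assets are at least double its liabilities, and here its only liability is the CDS of notional $\leq 1$.

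The one genuinely delicate point — and the main obstacle — is the \emph{exact} behaviour at the single point $r_u=0$: the gadget's correctness hinges on the fact that when $a_v=l_v$ the update function outputs $1$, not the second-branch value $\alpha/1<1$. This is not a robustness property but a knife-edge one, which is exactly why the gadget is called a \emph{discontinuity} gadget and why such constructions only make sense in the exact-clearing setting (and why the non-degeneracy assumption, which rules out the pathological $\max(a_i,l_i)=0$ case, is in force). I would make sure the notionals are chosen so that the equality $a_v=l_v$ at $r_u=0$ is exact as a statement about rationals (no rounding), and note that for $r_u>0$ the strict inequality $a_v<l_v$ is likewise strict, so there is no ambiguity in which branch of $F$ is taken. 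Everything else is the routine verification that the stated formula for $r_v$ is forced, which follows directly from plugging the assets and liabilities into~\eqref{eq:F-function} and~\eqref{eq:f-function}.
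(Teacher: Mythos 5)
Your proposal is correct and matches the paper's construction essentially exactly: the same gadget (bank $v$ with $e_v=1$, total debt liability $2$ to a sink, and a unit CDS from a solvent source with reference $u$ — splitting the notional-$2$ debt into two unit debts is immaterial), and the same two-case analysis using the $a_v\geq l_v$ branch of $F$ at $r_u=0$ and the default branch with $\beta=1$ for $r_u>0$. The extra checks you note (non-degeneracy, source solvency, the knife-edge equality at $r_u=0$) are consistent with the paper's setup.
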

\begin{proof}
  We wish to argue that the financial network gadget in Figure
  \ref{fig:discont} has this property.  First, assume $r_u = 0$. Then
  $p_{s,v} = c_{s, v}^u(1 - r_u) = 1$ and $v$ can pay all its
  liabilities, thus $r_v = 1$.  Now assume $r_u > 0$. Then $v$ is in
  default, and thus $r_v = (\alpha e_v + (1 - r_u))/l_v = (\alpha + 1 - r_u)/2$.
\end{proof}
\begin{figure}[h!]
    \centering
    \begin{tikzpicture}
  
  \begin{scope}[every node/.style={circle,thick,draw}]
    \node (u) at (1,2) [dashed] {$u$};
    \node (s) at (0,1) {$s$};
    \node (v) at (2,1) {$v$};
    \node (t) at (4,1) {$t$};
  \end{scope}
  
  \begin{scope}[ every edge/.style={draw=blue,very thick}]
    \path [->] (v) edge node[below] {$2$} (t);
  \end{scope}
  
  \begin{scope}[every edge/.style={draw=orange,very thick}]
    \path [->] (s) edge node[below] {$1$} (v);
  \end{scope}
  
  \begin{scope}[every edge/.style={draw=orange,dashed,very thick}]
    \path[-] (u) edge ($(s) !.5! (v)$);
  \end{scope}

  \begin{scope}[xshift=0.4cm, yshift=-0.4cm, every node/.style={rectangle, draw=black, fill=white}]   
    \node (ew) at (2,1) {$1$};
  \end{scope}
\end{tikzpicture}    

    \caption{Discontinuity gadget.}
    \label{fig:discont}
\end{figure}
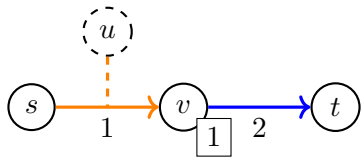
Using a discontinuity gadget with input $\abs{p(x)}$ and output bank
$v$ means that we must now distinguish between the cases $r_v=1$ and
$r_v (\alpha + 1 - \abs{p(x)})/2 \leq (1+\alpha)/2$. In order
distinguish between these using the infeasibility gadget of
Lemma~\ref{lem:infeas} we perform a sequence of~$k$ repeated squaring
operations using the multiplication gadget. Since
\[
  \left(\frac{1+\alpha}{2}\right)^{2^k} = \left(1- \frac{1-\alpha}{2}\right)^{2^k} < \exp\left(-(1-\alpha)2^{k-1}\right)
\]
we can pick $k=O(\log(1/(1-\alpha)))$ such that $(r_v)^{2^k} \leq 1/4$
whenever $r_v \leq (1+\alpha)/2$. For the final step of the
construction we use $(r_v)^{2^k}$ as input to the infeasibility
gadget. An abstract overview of the complete financial network created
by reduction is shown in Figure~\ref{fig:feasToCDS}. By construction
we have that there exists $x \in \ncube$ such that $p(x)=0$ if and only
if this financial network has a clearing recovery rate vector, thereby
completing the proof.

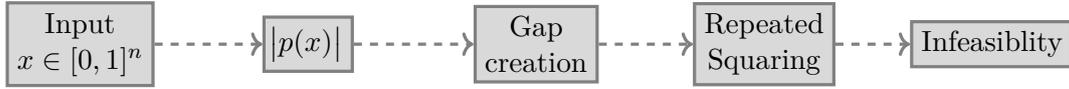
\begin{figure}[h!]
    \centering
    \begin{tikzpicture}
  
  \begin{scope}[every node/.style={draw=gray, fill=gray!30, align=center, rectangle, very thick}]
    \node (x) at (0,0) {Input \\ $x \in [0,1]^n$};
    \node (p) at (3,0) {$\abs{p(x)}$};
    \node (gap) at (6,0) {Gap\\creation};
    \node (sq) at (9,0) {Repeated\\Squaring};
    \node (infeas) at (12,0) {Infeasiblity};  
    
  \end{scope}

  \begin{scope}[every edge/.style={draw=gray,dashed,very thick}]
    \path[->] (x) edge (p);
    \path[->] (p) edge (gap);
    \path[->] (gap) edge (sq);
    \path[->] (sq) edge (infeas);
    
  \end{scope}

\end{tikzpicture}    

    \caption{An overview of the reduction to
      $\CDSHasClearing(\alpha,1)$.}
    \label{fig:feasToCDS}
\end{figure}

\subsection{Proof of Theorem~\ref{thm:CDSCanSurvive}}

The proof of Theorem~\ref{thm:CDSCanSurvive} starts in the same way as
the proof of Theorem~\ref{thm:CDSHasClearing}. We construct a
reduction from $\FEAS(\ncube)$ and from a given polynomial $p$,
following Proposition~\ref{prop:polygadget}, we build an financial
network with input gadgets for selecting $x \in \ncube$ followed by
computation of $\abs{p(x)}$. For the remainder of the construction we
use a simple financial network as shown in
Figure~\ref{fig:defaultset}.  We now wish to argue that there exists a
clearing recovery rate vector such that bank~$b$ is solvent if and
only if there exists $x \in \ncube$ such that $p(x)=0$.

We have that $r_b=1$ if and only bank~$b$ receives a payment of~$1$,
which happens if and only if $r_A=0$, which in turn holds if and only
if $\abs{p(x)}=0$, thereby completing the proof.
\begin{figure}[h!]
    \centering
    \begin{tikzpicture}
  
  \begin{scope}[every node/.style={draw=gray, fill=gray!30, align=center, rectangle, very thick}]
    \node (x) at (2.5,1) {Input \\ $x \in [0,1]^n$};
    \node (p) at (5.5,1) {$\abs{p(x)}$};
  \end{scope}

  \begin{scope}[every node/.style={circle, thick, draw, fill=white}]
    \node (A) at (8,1) {$A$};
    \node (t1) at (10,1) {$t_1$};
    \node (b) at (7+2,-1) {$b$};
    \node (t2)at (9+2,-1) {$t_2$};
    \node (s) at (5+2,-1) {$s$};
  \end{scope}

  \begin{scope}[every edge/.style={draw=gray,dashed,very thick}]
    \path[->] (x) edge (p);
  \end{scope}
  
  \begin{scope}[ every edge/.style={draw=blue,very thick}]
    \path[->] (p) edge node[below]{$1$}(A);
    \path[->] (A) edge node[below]{1} (t1);
    \path[->] (b) edge node[below]{1} (t2);
  \end{scope}

  \begin{scope}[every edge/.style={draw=orange,very thick}]
    \path [->] (s) edge node[below] {1} (b);
  \end{scope}
  
  \begin{scope}[every edge/.style={draw=orange,dashed,very thick}]
    \path[-] (A) edge  ($(s) !.5! (b)$);
  \end{scope}

\end{tikzpicture}    

    \caption{An overview the reduction to \CDSCanSurvive.}
    \label{fig:defaultset}
\end{figure}


\bibliographystyle{abbrvurl}
\bibliography{financialcomplexity}

\appendix

\section{Non-existence of a clearing recovery rate vector in
  a degenerate financial networks}
\label{sec:degeneracy}

In Figure~\ref{fig:degenerate} we give a relatively simple example of
a financial network, violating the non-degeneracy assumption, without
a clearing recovery vector. By construction of the network and
Definition~\ref{def:Clearing}, if $r$ is a clearing recovery rate
vector the following three equations must hold.
\begin{align}
  r_A & = \min(1,2(1-r_B))\enspace , &
  r_B & = \begin{cases}1 & \text{ if } r_C=1\\\frac{1}{2} & \text{ if } r_C<1\end{cases}\enspace , &
  r_C & = \begin{cases}1 & \text{ if } r_A=1\\0 & \text{ if } r_A<1\end{cases}\enspace .
\end{align}

With these we can rule out a clearing recovery vector. Suppose first
that $r_C=1$. The second and first equation then give that $r_B=1$ and
$r_A=0$. But then the third equation gives the contradiction that
$r_C=0$. Suppose next that $r_C=0$. The second and first equation then
give that $r_B=\frac{1}{2}$ and $r_A=1$. But then the third equation
gives the contradiction that $r_C=1$.

\begin{figure}[h!]
  \centering
  \begin{tikzpicture}
  \begin{scope}[every node/.style={circle, thick, draw}]   
    \node (f) at  (0,0) {f};
    \node (B) at  (2,0) {B};
    \node (g) at  (4,0) {g};
    \node (d) at  (2,1.5) {d};
    \node (A) at  (6,1.5) {A};
    \node (e) at  (8,1.5) {e};
    \node (C) at  (2,-1.5) {C};
    \node (h) at  (6,-1.5) {h};
  \end{scope}
  
  \begin{scope}[ every edge/.style={draw=blue,very thick}]
    \path [->] (A) edge node[above] {$1$} (e);
  \end{scope}
  
  \begin{scope}[every edge/.style={draw=orange,very thick}]
    \path [->] (f) edge node[above] {$1$} (B);
    \path [->] (B) edge node[above] {$2$} (g);
    \path [->] (d) edge node[above] {$2$} (A);
    \path [->] (C) edge node[below] {$1$} (h);
  \end{scope}
  
  \begin{scope}[every edge/.style={draw=orange,dashed,very thick}]
    \path[-] (B) edge ($(d) !.5! (A)$);
    \path[-] (C) edge ($(f) !.5! (B)$);
    \path[-] (C) edge ($(B) !.5! (g)$);
    \path[-] (A) edge ($(C) !.5! (h)$);
  \end{scope}

  \begin{scope}[xshift=0.4cm, yshift=-0.4cm, every node/.style={rectangle, draw=black, fill=white}]
    \node (ed) at (2,1.5) {$2$};
    \node (ef) at (0,0) {$1$};
  \end{scope}
  
\end{tikzpicture}    

    \label{fig:degenerate}
\end{figure}

As an alternative to making an assumption of non-degeneracy, one could
alternatively leave the recovery rate of a bank without liabilities be
unconstrained. Doing so would ensure that a clearing recovery rate
vector would always exists. In the example above one such clearing recovery
rate vector would be given by
$(r_A,r_B,r_C)=\left(1,\frac{1}{2},1\right)$, where bank~$B$ is given
recovery rate $\frac{1}{2}$ despite having no liabilities. Besides
ensuring existence with a less restrictive definition, it is easy to
show that the task of computing such a clearing recovery rate vector
is in $\FIXP$ using the recent framework of Filos-Ratsikas et
al.~\cite{Filos-RatsikasH20XX-FIXP-OPT}. On the other hand it is not
known whether computing an \eps-approximate clearing recovery rate
vector is in $\PPAD$ without the non-degeneracy
assumption~\cite{SchuldenzuckerSB2019-Complexity}.

\section{\PPAD-membership of \epsCDSClearing}
\label{sec:PPAD-membership}

Here we prove \PPAD-membership for the version of the \epsCDSClearing\
problem given by the notion of approximation defined in
Definition~\ref{def:eps-approx}.

\begin{proposition}
  The problem \epsCDSClearing\ is in \PPAD.
\end{proposition}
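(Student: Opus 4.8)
The plan is to obtain \PPAD-membership from the general result of Etessami and Yannakakis \cite[Proposition~2.2]{EtessamiYannakakis2010-FIXP} — that computing an $\eps$-almost fixed point of a polynomially continuous and polynomially computable function $\ncube \to \ncube$ lies in \PPAD — applied to the update function $f$ of~\eqref{eq:f-function}, followed by a polynomial-time rounding step that turns the resulting almost fixed point into an \eps-approximate clearing recovery rate vector in the sense of Definition~\ref{def:eps-approx}.

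First I would check that $f$ satisfies the hypotheses of \cite[Proposition~2.2]{EtessamiYannakakis2010-FIXP}. Non-degeneracy (Definition~\ref{def:non-degeneracy}) guarantees that $\max(a_i(r),l_i(r))$ is bounded below on $\ncube$ by a positive rational $\mu$ that is at least $2^{-\mathrm{poly}}$ in the input size — one may take the least positive external asset or debt notional appearing in the instance. Since $l_i$ is affine and $a_i$ a degree-$2$ polynomial in $r$, both with non-negative rational coefficients of total magnitude at most $C := \sum_i e_i + \sum_{i,j} c_{i,j} + \sum_{i,j,k} c_{i,j}^k$, the function $f$ is well defined on $\ncube$, maps it into itself, can be evaluated to any prescribed accuracy in polynomial time, and is Lipschitz on $\ncube$ with a constant $L = 2^{\mathrm{poly}}$; hence $f$ is polynomially computable and polynomially continuous, and for any accuracy $\eps' = 2^{-\mathrm{poly}}$ given in binary an $\eps'$-almost fixed point $r$ of $f$ can be found in \PPAD. (The search problem is total: an exact fixed point of $f$ exists by Brouwer's theorem, and since $a_i(r) \ge l_i(r)$ then forces $f(r)_i = r_i = 1$, such a fixed point already satisfies both clauses of Definition~\ref{def:eps-approx}.)

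The substance of the proof is the rounding step and the choice of $\eps'$. Given an $\eps'$-almost fixed point $r$ of $f$, I would set $U := \{\, i \in N : a_i(r) > (1+\eps/2)\,l_i(r)\,\}$ and define $r'$ by $r'_i = 1$ for $i \in U$ and $r'_i = r_i$ otherwise. For $i \in U$ we have $a_i(r) > l_i(r)$, hence $f(r)_i = 1$ and $r_i \ge 1-\eps'$, so $\Norm{r'-r}_\infty \le \eps'$; note also $r' \ge r$ coordinatewise, so liabilities only decrease, $l_i(r') \le l_i(r)$. For the near-fixed-point requirement $\Norm{f(r')-r'}_\infty \le \eps$: for $i \in U$ one checks $a_i(r') \ge l_i(r')$ — here non-degeneracy again yields $a_i(r) \ge 2^{-\mathrm{poly}}$ for $i \in U$, so the $\eps/2$ slack dominates the $O(L\eps')$ perturbation of $a_i$ — hence $f(r')_i = 1 = r'_i$; and for $i \notin U$, $\Abs{f(r')_i - r'_i} \le \Abs{f(r')_i - f(r)_i} + \Abs{f(r)_i - r_i} \le (L+1)\eps'$. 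All of this holds once $\eps'$ is a small enough $2^{-\mathrm{poly}}$ quantity.

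The step I expect to be the main obstacle is the second clause of Definition~\ref{def:eps-approx}, namely $a_i(r') \ge (1+\eps)\,l_i(r') \implies r'_i = 1$. For $i \in U$ the conclusion holds by construction, so it suffices to rule out the premise for $i \notin U$, and this hinges on a uniform lower bound $l_i(r) \ge 2^{-\mathrm{poly}}$ valid for every $i \notin U$: if $i$ writes a debt contract then $l_i(r)$ is at least that notional; otherwise $e_i > 0$, and then $a_i(r) \ge e_i$ together with $i \notin U$ forces $l_i(r) \ge e_i/(1+\eps/2)$. Granting this, $a_i(r') \ge (1+\eps)\,l_i(r')$ together with $\Abs{a_i(r') - a_i(r)}, \Abs{l_i(r') - l_i(r)} \le L\eps'$ would give $a_i(r) \ge (1+\eps)\,l_i(r) - O(L\eps')$, contradicting $a_i(r) \le (1+\eps/2)\,l_i(r)$ once $\eps' < \eps\,l_i(r)/O(L)$. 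Choosing $\eps'$ to be the minimum of the finitely many $2^{-\mathrm{poly}}$ thresholds that arise above, $r'$ is a valid \eps-approximate clearing recovery rate vector computed in polynomial time from $r$, so \epsCDSClearing\ is in \PPAD.
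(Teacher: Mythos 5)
Your argument is correct, but it takes a genuinely different route from the paper's. You apply the Etessami--Yannakakis result to the function $f$ of~(\ref{eq:f-function}) itself, at a much finer accuracy $\eps'=2^{-\mathrm{poly}}$, and then repair the outcome by an explicit rounding step: you identify the set $U$ of coordinates with $a_i(r)>(1+\eps/2)l_i(r)$, push those up to $1$, and control all the resulting errors with Lipschitz bounds of size $2^{\mathrm{poly}}$ together with the $2^{-\mathrm{poly}}$ lower bounds on $a_i$ (for $i\in U$) and on $l_i$ (for $i\notin U$) that non-degeneracy provides; the last of these is exactly the right observation for the delicate second clause of Definition~\ref{def:eps-approx}. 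The paper instead bakes the solvency slack into the fixed-point function: it works on the enlarged domain $[0,1+\eps]^N$ with $g(r)_i = a_i(\trunczo{r})/\max\bigl(a_i(\trunczo{r})/(1+\eps),\,l_i(\trunczo{r})\bigr)$, computes an $\eps$-almost fixed point of $g$ at the target accuracy $\eps$ itself, and simply truncates: whenever $a_i\geq(1+\eps)l_i$ the value of $g$ is $1+\eps$, so being within $\eps$ already forces $r_i\geq 1$ and hence $\trunczo{r}_i=1$ exactly, with no accuracy amplification, no Lipschitz constants, and no case analysis over a rounding set. Your version is self-contained and uses only the function from the main text, at the cost of the quantitative bookkeeping (choice of $\eps'$, perturbation of $a_i$ and $l_i$, the threshold $1+\eps/2$); the paper's trick of modifying the map yields a shorter and cleaner proof. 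One small point to make explicit if you write yours up: the almost-fixed-point problem you invoke must accept the accuracy parameter $\eps'$ in binary, which the Etessami--Yannakakis framework does allow for polynomially continuous and polynomially computable functions.
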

\begin{proof}
  Define the auxiliary functions $G \colon [0,1+\eps]^N \to [0,1+\eps]^N$ and $g \colon [0,1+\eps]^N \to [0,1+\eps]^N$  by
  \begin{equation}
    G(r)_i =
    \begin{cases}
      1+\eps & \text{ if } a_i(\trunczo{r}) \geq (1+\eps)l_i(\trunczo{r}) \\
      \frac{a_i(\trunczo{r})}{l_i(\trunczo{r})} & \text{ if } a_i(\trunczo{r}) < (1+\eps)l_i(\trunczo{r})
    \end{cases} 
    \label{eq:G-function}
  \end{equation}
and
\begin{equation}  
  g(r)_i = \frac{a_i(\trunczo{r})}{\max\left(\frac{a_i(\trunczo{r})}{1+\eps},l_i(\trunczo{r})\right)} \enspace .
  \label{eq:g-function}
\end{equation}
The function $g$ is well-defined on the entire domain by the
assumption of non-degeneracy.  Note also that by definition of $G$ and
$g$ we have that $G(r)=G(\trunczo{r})$ and $g(r)=g(\trunczo{r})$ for
all $r \in [0,1+\eps]^N$.

By the assumption of non-degeneracy we also have that the function $g$ is
polynomially continuous and polynomially computable, and computing an
\eps-almost fixed point of the function $g$ (i.e.\ a point $r$ such
that $g(r)=r\pm \eps$) is in \PPAD\
by~\cite[Proposition~2.2]{EtessamiYannakakis2010-FIXP}. Computing a
solution to \epsCDSClearing\ now simply consists of computing an
\eps-almost fixed point $r$ of $g$ and letting $r'=\trunczo{r}$ be the
resulting recovery rate vector. Next we shall prove that $r'$ is an
\eps-almost fixed point of $f$ and furthermore satisfies that
$a_i(r') \geq (1+\eps)l_i(r') \implies r'_i=1$ for all~$i \in N$.

Observe first that if $l_i(r')=0$, then $G(r')_i=g(r')_i=1+\eps$ and
thus $r'_i = 1$. Since also $F(r')_i=f(r')_i=1$ we have that both
conditions are trivially satisfied.

Assume now that $l_i(r')\neq 0$. We then have that
$G(r')_i=\min\left(1+\eps,\frac{a_i(r')}{l_i(r')}\right)=g(r')_i$. First
note that if $a_i(r')\geq (1+\eps)l_i(r')$ then $g(r')_i=1+\eps$ which
implies that $r'_i=1$.  Since also $F(r')_i=f(r')_i=1$ both conditions
are again satisfied. Consider now the remaining case of
$a_i(r') < (1+\eps)l_i(r')$. Then we have
$g_i(r')=\frac{a_i(r')}{l_i(r')}$ and thus
$r_i = \frac{a_i(r')}{l_i(r')} \pm \eps$, from which it follows that
$r'_i = \trunczo{\frac{a_i(r')}{l_i(r')}} \pm \eps$. Since
$f(r')=\min\left(1,\frac{a_i(r')}{l_i(r)}\right)$ we again have that the
conditions are satisfied.

\end{proof}

\section{Financial Network Gadgets for Arithmetic}
\label{sec:arithmetic-gadgets}
In this section we describe and analyze financial network gadgets for
arithmetic operations. We note that no banks (except implicitly, source
and sink banks) have external assets, meaning that they operate
correctly in the setting of $\alpha<1$.

\begin{lemma}[Constant~gadget {\cite[Lemma~6]{SchuldenzuckerSB2019-Complexity}}]
\label{lem:const}
For any constant $\zeta \in [0,1]$ there exists a financial network
gadget with no input bank and output bank $v$, such that
$r_v = \zeta$.
\end{lemma}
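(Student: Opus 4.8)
The goal is to build a financial network gadget with no input bank and a single output bank $v$ whose unique recovery rate is a prescribed constant $\zeta \in [0,1]$. The plan is to use a source bank $s$ and a sink bank $t$, together with the bank $v$, and wire them so that $v$'s assets and liabilities are fixed numbers whose ratio is exactly $\zeta$. Concretely, I would have $s$ write a debt contract of notional $\zeta$ to $v$ (recall source banks are by convention solvent and pay in full, so $v$ receives exactly $\zeta$), and have $v$ write a debt contract of notional $1$ to the sink bank $t$. Then $a_v(r) = \zeta$ and $l_v(r) = 1$, so for any recovery rate vector the clearing condition forces $r_v = a_v(r)/l_v(r) = \zeta$ when $\zeta < 1$, and $r_v = 1 = \zeta$ when $\zeta = 1$; in either case $F(r)_v = \min(1,\zeta) = \zeta$.

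The steps, in order, would be: (1) describe the three-bank network (source $s$, output $v$, sink $t$) with the two debt contracts of notionals $\zeta$ and $1$, and note that no banks here have external assets besides the implicit external assets of $s$ and $t$ supplied by the source/sink conventions — so the gadget works also for $\alpha < 1$, as needed for Theorem~\ref{thm:CDSHasClearing}; (2) compute $a_v(r)$ and $l_v(r)$ from the definitions, using that the source bank $s$ is solvent and hence $p_{s,v}(r) = c_{s,v} = \zeta$; (3) apply Definition~\ref{def:Clearing} (or equivalently the formula $F(r)_v = \min(1, a_v(r)/l_v(r))$ valid since $l_v(r) = 1 \neq 0$) to conclude $r_v = \zeta$ in any clearing recovery rate vector, and observe conversely that setting $r_v = \zeta$ together with the forced values on $s,t$ is consistent. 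One small edge case to handle: if $\zeta = 0$ then $v$ has assets $0$ but still liability $1$, which is fine, $r_v = 0$; non-degeneracy of $v$ is satisfied because $v$ writes the debt contract of notional $1$ to $t$.

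There is no real obstacle here — the lemma is elementary and is essentially a direct unwinding of the definitions, which is why it is deferred to the appendix. The only thing requiring a line of care is making sure the source and sink conventions (source banks have enough external assets to be solvent and pay in full; sink banks are trivially solvent) are invoked correctly so that $p_{s,v}(r)$ is genuinely the constant $\zeta$ regardless of the rest of the network, and that the non-degeneracy assumption is met by $v$. Since the statement is already proved in \cite{SchuldenzuckerSB2019-Complexity}, I would keep the proof to a short paragraph essentially as above.
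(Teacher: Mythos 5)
Your proposal is correct and uses essentially the same gadget and argument as the paper: a source bank $s$ writing a debt contract of notional $\zeta$ to $v$, and $v$ writing a debt contract of notional $1$ to a sink bank $t$, so that $a_v(r)=\zeta$, $l_v(r)=1$, and hence $r_v=\zeta$ in any clearing recovery rate vector. The extra remarks on the $\zeta\in\{0,1\}$ edge cases, non-degeneracy of $v$, and correctness for $\alpha<1$ are consistent with the paper's conventions and add nothing that conflicts with its proof.
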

\begin{proof}
  We argue that the financial network gadget in Figure~\ref{fig:const}
  satisfies the statement. We have $r_v = \trunczo{a_v(r)} = a_v(r) = \zeta$.
\end{proof}
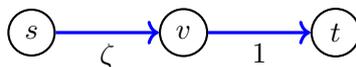
\begin{figure}[h!]
  \centering
  \begin{tikzpicture}
  
  \begin{scope}[every node/.style={circle,thick,draw}]
    \node (s) at (0,1) {$s$};
    \node (v) at (2,1) {$v$};
    \node (t) at (4,1) {$t$};
  \end{scope}
  
  \begin{scope}[ every edge/.style={draw=blue,very thick}]
    \path [->] (v) edge node[below] {$1$} (t);
    \path [->] (s) edge node[below] {$\zeta$} (v); 
  \end{scope} 
\end{tikzpicture}

  \caption{Constant gadget.}
  \label{fig:const}
\end{figure}

\begin{lemma}[Sum gadget {\cite[Lemma~8]{SchuldenzuckerSB2019-Complexity}}]
\label{lem:plus}
There exists a financial network gadget with input banks $u$ and $v$ and
output bank $w$, such that $r_w = \trunczo{r_u + r_v}$.
\end{lemma}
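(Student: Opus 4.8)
The plan is to realize $\trunczo{r_u+r_v}$ by a two-stage network that first \emph{inverts} each input and then inverts their sum, exploiting that recovery rates lie in $[0,1]$: a single CDS of notional~$1$ with reference~$x$, written by a (solvent) source bank and held by a bank whose only liability is a unit debt to a sink, produces recovery rate exactly $1-r_x$ at that bank. Concretely I would take source banks $s_1,s_2,s_3$, sink banks $t_1,t_2,t_3$, and internal banks $a$, $b$, and the output bank $w$, none of which hold external assets. Bank~$a$ holds a CDS written by~$s_1$ with reference~$u$ and notional~$1$ and owes a unit debt to~$t_1$; symmetrically $b$ holds a CDS written by~$s_2$ with reference~$v$ and notional~$1$ and owes a unit debt to~$t_2$. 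The output bank~$w$ holds two CDSs written by~$s_3$, with references~$a$ and~$b$ respectively, both of notional~$1$, and owes a unit debt to~$t_3$. Keeping~$w$ in this canonical form (a single unit debt to a sink, and appearing elsewhere only as a CDS reference) is what makes the gadget composable with the other gadgets.

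The second step is to read off the recovery rates. Since source banks are solvent by assumption they pay their liabilities in full, so the CDS written by~$s_1$ transfers $(1-r_u)\cdot 1$ to~$a$; as $r_u\in[0,1]$, bank~$a$ has assets $1-r_u\le 1$ against liability~$1$, so $r_a=\min(1,1-r_u)=1-r_u$, and likewise $r_b=1-r_v$. Then~$w$ receives a total of $(1-r_a)+(1-r_b)=r_u+r_v$ from~$s_3$ against a unit liability, giving $r_w=\min(1,r_u+r_v)=\trunczo{r_u+r_v}$ since $r_u+r_v\ge 0$. These identities are forced at every clearing recovery rate vector, so the gadget computes exactly $\trunczo{r_u+r_v}$.

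Finally I would record robustness to default costs. Every internal bank has $e_i=0$, so when $\beta=1$ we have $a'_i(r)=\beta\sum_j p_{j,i}(r)=a_i(r)$ for $i\in\{a,b,w\}$, and the computation above is unchanged; thus the gadget also works in the setting $\alpha<1$, $\beta=1$ used later, and non-degeneracy holds because each internal bank writes a unit debt contract. I expect no real obstacle here: the only points requiring care are ensuring the topology never makes $u$ or $v$ a writer or holder of a contract inside the gadget (they appear solely as CDS references, via dashed lines), and the standard bookkeeping that the source banks have enough external assets to remain solvent, both of which follow from the conventions fixed in the preliminaries.
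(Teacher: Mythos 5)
Your construction is correct and is essentially the paper's own gadget (Figure~\ref{fig:plus}): two inverter stages computing $1-r_u$ and $1-r_v$, whose outputs are then inverted again into $w$ via unit-notional CDSs, giving $r_w=\trunczo{r_u+r_v}$; the only cosmetic difference is that you let one source bank write both CDSs held by $w$ while the paper uses two separate sources, which changes nothing.
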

\begin{proof}
  We argue that the financial network gadget in Figure~\ref{fig:plus}
  satisfies the statement. We have
    \begin{align*}
        r_w &= \trunczo{a_w(r)}  \\
            &= \trunczo{1-r_A + 1 - r_B} \\
            &= \trunczo{1-(1 - r_u) + 1 - (1 - r_v)} \\
            &= \trunczo{r_u  + r_v}.
    \end{align*}
\end{proof}
\begin{figure}[h!]
  \centering \begin{tikzpicture}
  \begin{scope}[every node/.style={circle, thick, draw}]
    \node (u) at (1,3) [dashed] {$u$};
    \node (v) at (7,3) [dashed] {$v$};
    \node (s1) at (0,2) {$s_1$};
    \node (A) at (2,2) {$A$};
    \node (t1) at (4,2) {$t_1$};

    \node (s2) at (6+0,2) {$s_2$};
    \node (B) at  (6+2,2) {$B$};
    \node (t2) at (6+4,2) {$t_2$};

    \node (s3) at (2+1,0) {$s_3$};
    \node (w) at  (2+3,0) {$w$};
    \node (t3) at (2+3,-2) {$t_3$};
    \node (s4) at (2+5,0) {$s_4$};
  \end{scope}
  
  \begin{scope}[ every edge/.style={draw=blue,very thick}]
    \path [->] (A) edge node[below] {$1$} (t1);
    \path [->] (B) edge node[below] {$1$} (t2);
    \path [->] (w) edge node[right] {$1$} (t3);
  \end{scope}
  
  \begin{scope}[every edge/.style={draw=orange,very thick}]
    \path [->] (s1) edge node[below] {$1$} (A);
    \path [->] (s2) edge node[below] {$1$} (B);
    \path [->] (s3) edge node[below] {$1$} (w);
    \path [->] (s4) edge node[below] {$1$} (w);
  \end{scope}
  
  \begin{scope}[every edge/.style={draw=orange,dashed,very thick}]
    \path[-] (u) edge ($(s1) !.5! (A)$);
    \path[-] (A) edge ($(s3) !.5! (w)$);
    \path[-] (v) edge ($(s2) !.5! (B)$);
    \path[-] (B) edge ($(s4) !.5! (w)$);
  \end{scope}
\end{tikzpicture}    
  \caption{Sum gadget.}
  \label{fig:plus}
\end{figure}
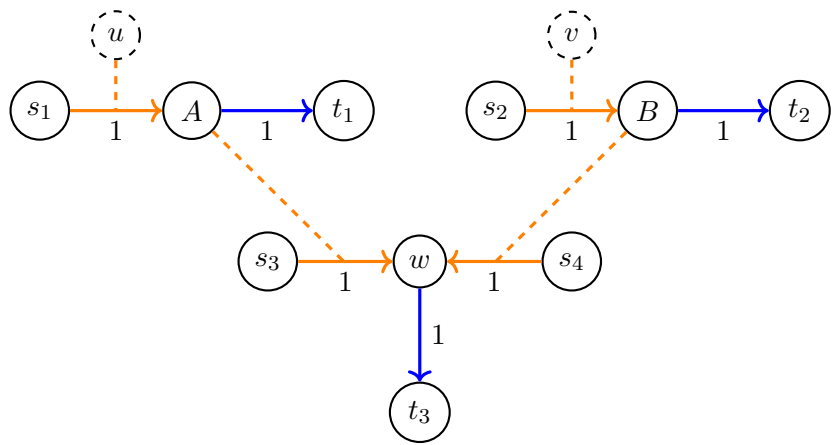

\begin{lemma}[Difference gadget {\cite[Lemma~9]{SchuldenzuckerSB2019-Complexity}}]
    \label{lem:minus}
    There exists a financial network gadget with input bank $u, v$ and output bank $w$ such that $r_w = \trunczo{r_v - r_u}$.
\end{lemma}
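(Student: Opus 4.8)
The plan is to build a financial network gadget in the same spirit as the sum gadget of Lemma~\ref{lem:plus}, relying on the basic primitive that a CDS of notional~$1$ with reference bank~$k$, written by a source bank (which is solvent, hence pays in full), delivers exactly $1-r_k$ to its holder. Since $r_v-r_u=(1-r_v)+r_u-1$, I would first construct an intermediate bank whose recovery rate equals $\trunczo{(1-r_v)+r_u}$ and then negate it once more, so that the complement becomes $\trunczo{1-\trunczo{(1-r_v)+r_u}}=\trunczo{r_v-r_u}$.

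Concretely, the gadget would consist of three internal banks $C$, $A$, $w$ together with the necessary source and sink banks, with each of $C$, $A$, $w$ owing a debt contract of notional~$1$ to its own sink bank; this fixes their total liability at~$1$ and ensures non-degeneracy, while contributing nothing else since sink banks are far from default. Bank~$C$ holds one CDS of notional~$1$ with reference~$u$, so $a_C(r)=1-r_u$, $l_C(r)=1$, and hence $r_C=\trunczo{1-r_u}=1-r_u$. Bank~$A$ holds two CDSs of notional~$1$, one with reference~$v$ (delivering $1-r_v$) and one with reference~$C$ (delivering $1-r_C=r_u$), so $a_A(r)=(1-r_v)+r_u$, $l_A(r)=1$, and hence $r_A=\trunczo{(1-r_v)+r_u}$. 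Finally the output bank~$w$ holds one CDS of notional~$1$ with reference~$A$, so $a_w(r)=1-r_A$, $l_w(r)=1$, and $r_w=\trunczo{1-r_A}$. A short case split on whether $(1-r_v)+r_u\le 1$ (equivalently $r_u\le r_v$) then yields $r_w=\trunczo{r_v-r_u}$: in the first case the upper truncation at $A$ is inactive, so $r_A=1-r_v+r_u$ and $r_w=r_v-r_u\in[0,1]$; in the second case $r_A=1$, so $r_w=0=\trunczo{r_v-r_u}$. Since no internal bank has external assets and none writes a CDS, this works verbatim for default costs with $\alpha<1$ and $\beta=1$, as the later reductions require.

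The verifications are all routine --- one checks that $1-r_v+r_u\ge 0$, so the truncation at $A$ only ever clips from above, and then tracks the two cases through the definition of the update function~$F$ --- except for one point that deserves a sentence of justification and is the reason for routing through $A$ rather than building $w$ more directly: the \emph{lower} truncation (the value~$0$ obtained when $r_u>r_v$) is not produced by any addition or multiplication of recovery rates, but only by driving some bank all the way down to recovery rate~$0$, which here is exactly what happens to $w$ when $r_A=1$. I would finish by drawing the corresponding figure and remarking, as for the sum gadget, that in any clearing recovery rate vector the values $r_C,r_A,r_w$ are functionally pinned down by $r_u$ and $r_v$, so that the gadget composes correctly inside a larger network.
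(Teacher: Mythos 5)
Your construction is exactly the paper's gadget up to renaming of banks (your $C$, $A$, $w$ are the paper's $A$, $B$, $w$): negate $u$, then take the truncated sum of $r_u$ and $1-r_v$, then negate once more, and the paper concludes via the identity $1-\trunczo{1-(r_v-r_u)}=\trunczo{r_v-r_u}$ where you do an equivalent two-case split. The proposal is correct and takes essentially the same approach as the paper.
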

\begin{proof}
    We argue that the financial network gadget in Figure~\ref{fig:difference} satisfies the statement. We have
    \begin{align*}
        r_w &= 1 - r_B \\
        &= 1 - \trunczo{(1 - r_A) + (1 - r_v)}\\
        &= 1 - \trunczo{1 - (1-r_u) + (1 - r_v)}\\
        &= 1 - \trunczo{1 - (r_v - r_u)}\\
        & = \trunczo{r_v - r_u} 
    \end{align*}
\end{proof}
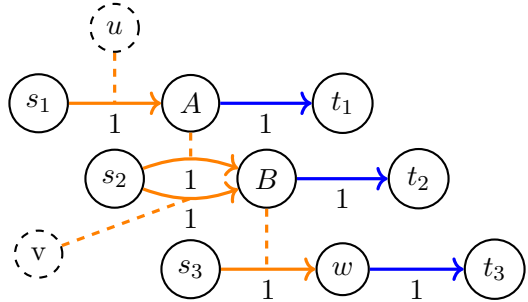
\begin{figure}[h!]
    \centering
    \begin{tikzpicture}
  
  \begin{scope}[every node/.style={circle, thick, draw}]
    \node (u) at (1,2) [dashed] {$u$};
    
    \node (s1) at (0,1) {$s_1$};
    \node (A) at  (2,1) {$A$};
    \node (t1) at (4,1) {$t_1$};
    \node (s2) at (1,0) {$s_2$};
    \node (C) at  (3,0) {$B$};
    \node (t2) at (5,0) {$t_2$};
    \node (s3) at (2,-1.2) {$s_3$};
    \node (w) at  (4,-1.2) {$w$};
    \node (t3) at (6,-1.2) {$t_3$};
    
    \node (v) at  (0, -1) [dashed] {v};

  \end{scope}
  
  \begin{scope}[ every edge/.style={draw=blue,very thick}]
    \path [->] (A) edge node[below] {$1$} (t1);

    \path [->] (C) edge node[below] {$1$} (t2);

    \path [->] (w) edge node[below] {$1$} (t3);

  \end{scope}
  
  \begin{scope}[every edge/.style={draw=orange,very thick}]
    \path [->] (s1) edge node[below] {$1$} (A);
    \path [->] (s2) edge [bend left=20]  coordinate[pos=.5] (ac)  node[below] {$1$} (C);
    \path [->] (s2) edge [bend right=20] coordinate[pos=.5] (vc)  node[below] {$1$} (C);
    \path [->] (s3) edge node[below] {$1$} (w);
  \end{scope}
  
  \begin{scope}[every edge/.style={draw=orange,dashed,very thick}]
    \path[-] (u) edge ($(s1) !.5! (A)$);
    \path[-] (A) edge (ac);
    \path[-] (v) edge  (vc);
    \path[-] (C) edge ($(s3) !.5! (w)$);
  \end{scope}
\end{tikzpicture}    

    \caption{Difference gadget.}
    \label{fig:difference}
\end{figure}
For multiplication,
Ioannides~\cite{IoannidisKV2022-Strong-Approximations} give a simple,
but \emph{degenerate}, financial network gadget, and afterwards modify
it to a non-degenerate gadget computing the expression
$(r_u(1+r_v))/4$. We give a slightly different and arguably simpler
gadget below, that also avoids having banks (other than source or sink
banks) with external assets.
\begin{lemma}[Multiplication gadget]
    \label{lem:mult}
    There exists a financial network gadget with input banks $u$ and
    $v$ and output bank $w$, such that $r_w = r_ur_v/2$.
\end{lemma}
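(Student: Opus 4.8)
The plan is to build a small financial network, in the spirit of the basic gadget of Figure~\ref{fig:basic-gadget}, in which one CDS converts the ``inverted'' recovery rate $1-r_u$ into a useful payment, and then chain two such inversions together so that two applications of $x\mapsto 1-x$ cancel out while leaving a factor that depends on $r_v$. Concretely, I would first route money through a bank, call it $A$, whose recovery rate satisfies $r_A = \trunczo{c\,(1-r_u)}$ for a notional $c$ to be chosen. Feeding $r_A$ as a reference into a second CDS and combining it with $r_v$ should then let me produce $r_w = \trunczo{(1-r_A)\cdot(\text{something linear in }r_v)}$. Since $1-r_A = 1 - c(1-r_u)$, choosing $c$ carefully (the natural guess is $c=1$ on the first stage, or a value that keeps everything inside $[0,1]$ so no truncation occurs) makes $1-r_A = r_u$ up to scaling, and then I want the liability structure at $w$ to contribute the $\tfrac{1}{2}r_v$ factor. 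The cleanest route is probably to have $w$ write a debt contract of notional~2 (so that a fully-paid-in asset of $r_u r_v$ yields $r_w = r_u r_v / 2$, exactly as in the discontinuity gadget of Lemma~\ref{lem:discont} where a notional-2 debt halves the recovered amount), and to have the incoming CDS at $w$ carry notional-type dependence on $r_v$.

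The key steps, in order, would be: (1) draw the network, with input banks $u,v$ (dashed), one or two internal banks ($A$, possibly $B$), source banks $s_i$, sink banks $t_i$, and output bank $w$, choosing all notionals so that every internal bank is non-degenerate by virtue of the source/sink conventions (every internal bank writes a notional-1 contract to a sink and this is suppressed in the diagram); (2) write down, for an arbitrary clearing recovery rate vector $r$, the equations $a_i(r)$, $l_i(r)$ for each internal bank and apply Definition~\ref{def:Clearing}, exactly as done in the arithmetic gadget proofs above; (3) verify by a short computation — like the three-line derivations in the sum and difference gadget proofs — that these equations force $r_w = r_u r_v/2$; (4) check that the argument is insensitive to $\alpha<1$, which holds automatically because, as remarked at the start of Appendix~\ref{sec:arithmetic-gadgets}, no internal bank carries external assets, so the default-cost term $\alpha e_i$ vanishes and $a_i'(r)$ differs from $a_i(r)$ only by the $\beta$-factor, which equals~1 in our setting.

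The main obstacle is step~(1)–(3) together: getting the notionals to line up so that a \emph{product} $r_u r_v$ (rather than a sum, difference, or scaled copy) emerges, while simultaneously keeping all intermediate quantities inside $[0,1]$ so that no $\trunczo{\cdot}$ clips anything and the final identity is exact rather than merely a truncated approximation. Multiplication is genuinely harder than the linear operations because a single CDS liability $(1-r_k)c_{i,j}^k$ is bilinear only in the sense of (recovery rate of writer) times (notional); to get $r_u \cdot r_v$ one must arrange for one factor to appear as a \emph{reference} recovery rate and the other as the \emph{writer's own} recovery rate multiplying the same liability channel. I expect the construction to place $v$ (or a bank carrying $r_v$) as the writer of the CDS whose reference is $A$ (carrying $1-r_A \approx r_u$), so that the payment into $w$ is $r_v \cdot (1-r_A)\cdot c$, and then a notional-2 outgoing debt at $w$ supplies the $\tfrac12$. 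Once the picture is right, the verification collapses to a computation of the same length and flavor as the sum- and difference-gadget proofs, and the $\alpha<1$ robustness is immediate.
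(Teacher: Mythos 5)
Your high-level mechanism is the right one and matches the paper's: one factor must enter as the \emph{writer's own} recovery rate and the other through a \emph{reference} bank, with a first inversion stage $r_A=1-r_u$ (and, in the paper, also $r_B=1-r_v$). But the proposal has a genuine gap at exactly the point you flag as the obstacle. First, you cannot let the input bank $v$ itself write the CDS with reference $A$: adding an outgoing contract to an input bank increases its total liability and therefore changes $r_v$, breaking the gadget that produces $r_v$; note that every gadget in the paper touches its input banks only as CDS references. Second, if you instead introduce an auxiliary bank $D$ ``carrying $r_v$'' as the writer, its total liability now contains the conditional liability of the very CDS feeding $w$, namely $(1-r_A)c=r_u c$, which varies with $r_u$. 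Then $r_D$ is not $r_v$ but $r_v$ divided by an $r_u$-dependent denominator, and the payment into $w$ is not a clean product; this is essentially why the earlier gadget of Ioannidis~et~al.\ only computes $(r_u(1+r_v))/4$. The missing idea is to force the product-carrying bank's total liability to be a \emph{constant} by adding a balancing CDS to a sink. In the paper's Figure~\ref{fig:mult}, bank $C$ has assets $1-r_A=r_u$, and writes a debt of notional $1$ to a sink, a CDS of notional $1$ to $w$ with reference $B$ (liability $1-r_B=r_v$), and a CDS of notional $1$ to a sink with reference $v$ (liability $1-r_v$), so that
\[
  l_C \;=\; 1+r_v+(1-r_v)\;=\;2
\]
identically, giving $r_C=r_u/2$ and $p_{C,w}=r_C(1-r_B)=r_ur_v/2$, with $w$ writing an ordinary notional-$1$ debt. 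In particular the factor $\tfrac12$ comes from $C$'s constant liability of $2$, not from a notional-$2$ debt at $w$; your notional-$2$ debt would only help if the incoming payment were already exactly $r_ur_v$, which is precisely what the balancing trick is needed to achieve. (Your point (4) about $\alpha<1$ is fine, since the internal banks carry no external assets and $C$ is always in default because $a_C=r_u\le 1<2=l_C$.)
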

\begin{proof}
  We argue that the financial network gadget in Figure~\ref{fig:mult}
  satisfies the statement. First we have that $r_A=1-r_u$ and $r_B=1-r_v$. It follows that
  \[
    r_C=\frac{1-r_A}{1+(1-r_v)+(1-r_B)}=\frac{r_u}{1+(1-r_v)+r_v}=\frac{r_u}{2} \enspace ,
  \]
  and finally we have
  \[
    r_w = r_C(1-r_B)=r_ur_v/2 \enspace .
  \]
\end{proof}
\begin{figure}[h!]
    \centering
    \begin{tikzpicture}
  \begin{scope}[every node/.style={circle, thick, draw}]
    \node (u) at (0,3) [dashed] {$u$};
    \node (v) at (7,3) [dashed] {$v$};
    
    \node (s1) at (-1+0,2) {$s_1$};
    \node (A) at  (-1+2,2) {$A$};
    \node (t1) at (-1+4,2) {$t_1$};

    \node (s2) at (7+0,2) {$s_2$};
    \node (B) at  (7+2,2) {$B$};
    \node (t2) at (7+4,2) {$t_2$};

    \node (s3) at (2+1,0) {$s_3$};
    \node (C) at  (2+3,0) {$C$};
    \node (t3) at (2+7,0) {$t_3$};
    \node (w) at  (2+5,0) {$w$};
    \node (t4) at (2+3, 2) {$t_4$};

  \end{scope}
  
  \begin{scope}[ every edge/.style={draw=blue,very thick}]
    \path [->] (A) edge node[below] {$1$} (t1);
    \path [->] (B) edge node[below] {$1$} (t2);
    \path [->] (w) edge node[below] {$1$} (t3);

    \path [->] (C) edge[bend left] node[left] {$1$} (t4);
  \end{scope}
  
  \begin{scope}[every edge/.style={draw=orange,very thick}]
    \path [->] (s1) edge node[below] {$1$} (A);
    \path [->] (s2) edge node[below] {$1$} (B);
    \path [->] (s3) edge node[below] {$1$} (C);
    \path [->] (C) edge node[below] {$1$} (w);
    \path [->] (C) edge[bend right] coordinate[pos=.5] (midCt4) node[right] {$1$} (t4);
  \end{scope}
  
  \begin{scope}[every edge/.style={draw=orange,dashed,very thick}]
    \path[-] (u) edge ($(s1) !.5! (A)$);
    \path[-] (A) edge ($(s3) !.5! (C)$);
    \path[-] (v) edge ($(s2) !.5! (B)$);
    \path[-] (B) edge ($(C) !.5! (w)$);
    \path[-] (v) edge (midCt4);
  \end{scope}
\end{tikzpicture}    

    \caption{Multiplication gadget.}
    \label{fig:mult}
\end{figure}
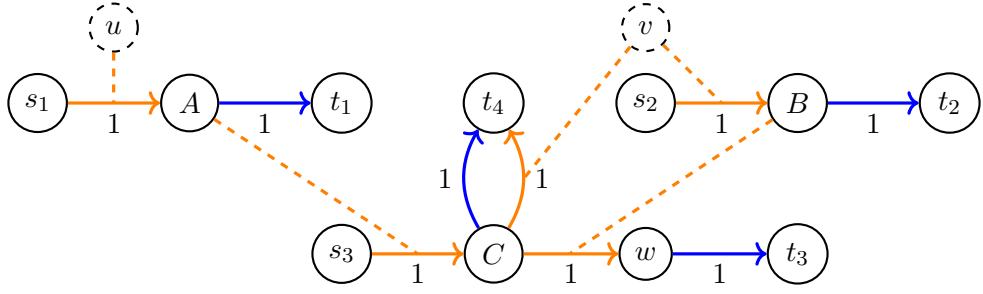
We may use the addition gadget of Lemma~\ref{lem:plus} together with
the multiplication gadget of Lemma~\ref{lem:mult} to get a ``proper''
multiplication gadget computing the expression $r_ur_v$.

\section{Infeasibility gadget}
\label{sec:infeasibility}
In this section we describe a special case of the infeasibility gadget
constructed by by
Schuldenzucker~et~al.~\cite{SchuldenzuckerSB2019-Complexity}. The
gadget is based on a \emph{cut-off gadget} and an \emph{OR gadget}.

\begin{lemma}[Cut-off Gadget {\cite[Lemma~3]{SchuldenzuckerSB2019-Complexity}}]
  \label{lem:cutoff}
  There exists a financial network gadget with input bank $u$ and
  output bank $v$, such that
\begin{align*}
    r_u \leq K &\implies r_v = 0\\
    r_u \geq L &\implies r_v = 1.
\end{align*}
\end{lemma}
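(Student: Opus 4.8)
The plan is to build the gadget from the basic inverter of Figure~\ref{fig:basic-gadget} by composing two such inverters in series, so that the double negation yields a (clipped) increasing function of $r_u$ with a steep slope, forcing the output to saturate at $0$ or $1$ outside a narrow transition band. Concretely, I would first recall that a single CDS-based gadget with input bank $p$, reference the input, and a notional $c$ gives an output bank $q$ with $r_q = \trunczo{c\,(1-r_p)}$. Chaining two of these, with notionals $c_1$ on the first stage (input $u$, intermediate bank $z$) and $c_2$ on the second (input $z$, output $v$), produces
\[
  r_v = \trunczo{\,c_2\bigl(1-\trunczo{c_1(1-r_u)}\bigr)\,}.
\]
Choosing $c_1$ large enough that $c_1(1-r_u)\ge 1$ whenever $r_u\le K$ (i.e.\ $c_1\ge 1/(1-K)$) makes the inner truncation equal $1$, hence $r_v=0$; and choosing $c_2$ large enough that $c_2(1-r_z)\ge 1$ whenever $r_z\le$ (the value forced when $r_u\ge L$) makes $r_v=1$. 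The arithmetic then reduces to verifying that, for $r_u\ge L$, the intermediate recovery rate $r_z=\trunczo{c_1(1-r_u)}$ is at most some threshold $K'<1$, and then picking $c_2\ge 1/(1-K')$.

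The key steps, in order, are: (i) state the series composition of two inverter gadgets and compute the composed input–output relation (this is just the CDS relation $r_v=\min(1,c_v^{\text{ref}}(1-r_{\text{ref}}))$ applied twice, as in the text preceding Figure~\ref{fig:basic-gadget}); (ii) fix $c_1 = 1/(1-K)$ and check that $r_u\le K \Rightarrow r_z = \trunczo{c_1(1-r_u)} = 1 \Rightarrow r_v = \trunczo{c_2\cdot 0} = 0$; (iii) for $r_u\ge L$ compute $r_z \le c_1(1-L) = (1-L)/(1-K) =: K'$, note $K'<1$ since $L>K$, then fix $c_2 = 1/(1-K')$ and check $r_v = \trunczo{c_2(1-r_z)} \ge \trunczo{c_2(1-K')} = 1$; (iv) observe all banks involved are non-degenerate (the intermediate and output banks each write a debt contract of notional $1$ to a sink, the source banks are as in the standard convention), and that the construction is polynomial-size and uses only rational notionals. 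Finally, exhibit the diagram analogous to Figure~\ref{fig:basic-gadget} with the two stages drawn explicitly.

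I expect the only mildly delicate point to be handling the \emph{exact} boundary behaviour of the truncation: one must make sure that at $r_u = K$ the inner value is $\ge 1$ (not merely close to $1$) and at $r_u = L$ the outer value is $\ge 1$, which is why the inequalities in the lemma are stated with $\le K$ and $\ge L$ and why I set the notionals to make the relevant linear expression hit exactly $1$ at the endpoint — giving $\ge 1$ on the whole closed interval. There is no genuine obstacle here; the gadget is a straightforward two-stage cascade of the inverter, and the entire argument is a short chain of truncation identities. (If one prefers a cleaner statement with a margin, one can instead take $c_1 = 1/(1-K')$ for any $K'\in(K,L)$ and $c_2$ correspondingly larger, but the minimal choice above already suffices.)
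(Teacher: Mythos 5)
Your proposal is correct and is essentially the paper's own construction: the paper's cut-off gadget is exactly a two-stage cascade of the CDS inverter, with first notional $\frac{1}{1-K}$ and second notional $\frac{1-K}{L-K}$, which coincides with your choice $c_2 = \frac{1}{1-K'}$ for $K' = \frac{1-L}{1-K}$. The verification (inner value saturating at $1$ when $r_u \leq K$, hence output $0$; and $r_z \leq K'$ when $r_u \geq L$, hence output $1$) matches the paper's argument step for step.
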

\begin{proof}
  We argue that the financial network in Figure \ref{fig:cutoff}
  satisfies the statement.  Assume $r_u \leq K$. Then
  $p_{s,A} \geq \frac{1}{1-K}(1-K) = 1$, so $A$ does not default,
  causing $r_v = 0$.  Now assume $r_u \geq L$. This means
  $r_A \leq \frac{1 - L}{1 - K}$, so
  $r_v \geq \frac{1 - K}{L - K}  \left(1 - \frac{1 - L}{1 -
      K}\right) = 1.$
\end{proof}
\begin{figure}[h!]
  \centering
  \begin{tikzpicture}
  
  \begin{scope}[every node/.style={circle,thick,draw}]
    \node (u) at (1,3) [dashed] {$u$};
    \node (s) at (0,1) {$s$};
    \node (A) at (2,2) {$A$};
    \node (t) at (4,1) {$t$};
    \node (v) at (2,0) {$v$};
  \end{scope}
  
  \begin{scope}[ every edge/.style={draw=blue,very thick}]
    \path [->] (v) edge node[below right] {$1$} (t);
    \path [->] (A) edge node[above right] {$1$} (t);
  \end{scope}
  
  \begin{scope}[every edge/.style={draw=orange,very thick}]
    \path [->] (s) edge node[above left] {$\frac{1}{1 - K}$} (A);
    \path [->] (s) edge node[below left] {$\frac{1 - K}{L - K}$} (v);
  \end{scope}
  
  \begin{scope}[every edge/.style={draw=orange,dashed,very thick}]
    \path[-] (u) edge ($(s) !.5! (A)$);
    \path[-] (A) edge ($(s) !.5! (v)$);
  \end{scope}
\end{tikzpicture}    

  \caption{Cut-off gadget.}
  \label{fig:cutoff}
\end{figure}
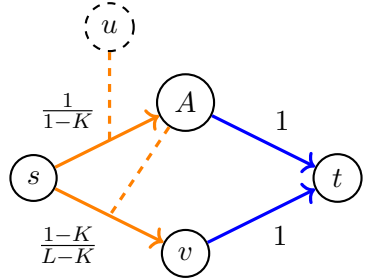

\begin{lemma}[OR Gadget \cite{SchuldenzuckerSB2019-Complexity}]
    There exists a financial network gadget with input banks $u, v$ and output bank $w$, such that \begin{align*}
        r_u \leq \frac{1}{4} \wedge r_v \leq \frac{1}{4} &\implies r_w = 0,\\
        r_u \geq \frac{3}{4} \vee r_v \geq \frac{3}{4} & \implies r_w = 1.
    \end{align*}
\end{lemma}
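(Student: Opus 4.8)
The plan is to build the \OR\ gadget by composing gadgets we have already established rather than designing a new network from scratch: first ``purify'' each of the two inputs with a cut-off gadget, and then combine the purified values with a sum gadget. The point is that truncated addition already behaves like a logical \textsf{OR} once its operands have been pinned to the clean values $0$ and $1$ in the two regimes the lemma cares about.

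Concretely, I would instantiate the cut-off gadget of Lemma~\ref{lem:cutoff} with $K=\tfrac14$ and $L=\tfrac34$ (here $K<L$, so the notionals $\tfrac{1}{1-K}=\tfrac43$ and $\tfrac{1-K}{L-K}=\tfrac32$ appearing in Figure~\ref{fig:cutoff} are well defined), applying it once to the input bank $u$ to obtain an output bank $u'$ and once to $v$ to obtain $v'$. These satisfy $r_u\le\tfrac14\implies r_{u'}=0$, $r_u\ge\tfrac34\implies r_{u'}=1$, and symmetrically for $v$; for $r_u$ in the intermediate range the value $r_{u'}$ is unconstrained but still lies in $[0,1]$. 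Next I would take a copy of the sum gadget of Lemma~\ref{lem:plus} with input banks $u'$ and $v'$ and output bank $w$, so that in any clearing recovery rate vector $r_w=\trunczo{r_{u'}+r_{v'}}$. The \OR\ gadget is the union of these three gadgets, with designated input banks $u,v$ and output bank $w$; since the two cut-off gadgets' output banks are exactly the sum gadget's input banks, the only banks shared between components are these interface banks, so the guarantees of all three components hold simultaneously in any clearing recovery rate vector of the combined network.

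It then remains to verify the two implications. If $r_u\le\tfrac14$ and $r_v\le\tfrac14$, then $r_{u'}=r_{v'}=0$, hence $r_w=\trunczo{0+0}=0$. If instead $r_u\ge\tfrac34$ (the case $r_v\ge\tfrac34$ is symmetric), then $r_{u'}=1$ while $r_{v'}\in[0,1]$, hence $r_w=\trunczo{1+r_{v'}}=1$. This establishes both required implications, and no claim is needed when both inputs lie in $(\tfrac14,\tfrac34)$.

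I do not expect a real obstacle here: every ingredient is an already-analyzed gadget, and the one point requiring (routine) care is that gluing gadgets along shared input/output banks preserves their individual behavior. This holds because the internal banks of each component have no external assets and their recovery rates are determined functionally by the interface banks, so restricting a clearing recovery rate vector of the combined network to any component yields a clearing recovery rate vector of that component, to which its lemma applies. An alternative would be to exhibit a single hand-crafted financial network implementing \textsf{OR} directly, but the modular construction above is cleaner and reuses Lemmas~\ref{lem:cutoff} and~\ref{lem:plus} verbatim.
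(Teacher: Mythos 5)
Your construction is exactly the one in the paper: two cut-off gadgets with $K=\tfrac14$, $L=\tfrac34$ applied to $u$ and $v$, whose outputs feed a sum gadget producing $r_w=\trunczo{r_{u'}+r_{v'}}$, which yields both implications. The proposal is correct and takes essentially the same approach, only adding an explicit (and sound) remark about why composing gadgets along their interface banks preserves each component's guarantees.
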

\begin{proof}
  First, we use two cut-off gadgets from Lemma \ref{lem:cutoff} with
  input banks $u, v$ and output banks $u^\prime, v^\prime$,
  respectively, and with $K = \frac{1}{4}$, $L = \frac{3}{4}$. Now,
  $r_{u^\prime} = 0$ if $r_u \leq \frac{1}{4}$ and $r_{u^\prime} = 1$
  if $r_u \geq \frac{3}{4}$. The same holds for $v, v^\prime$.  Next,
  we use $u^\prime, v^\prime$ as input banks to a sum gadget from
  Lemma \ref{lem:plus}. From this lemma, we get that
  \[
    r_w =
    \begin{cases}
      0 & \text{ if } r_{v^\prime} = r_{u^\prime} = 0\\
      1 & \text{ if } r_{v^\prime} = 1 \vee r_{u^\prime} = 1\\
    \end{cases}.
  \]
\end{proof}
With these gadgets in place we can now describe and analyze the
infeasibility gadget.
\begin{proof}[Proof of Lemma~\ref{lem:infeas}]
  We argue that the financial network in Figure~\ref{fig:infeas}
  satisfies the statement when the cut-off gadget has
  $K = \frac{3\alpha + 1}{4}$ and $L = \frac{\alpha + 3}{4}$.
    
  First let us assume $r_u \geq \frac{3}{4}$. Then the OR gadget with
  output bank $A$ has $r_A = 1$, so $r_B = \frac{4\alpha}{5}$, and
  $r_C = 0$ giving us a clearing recovery rate vector.
  Now, let us assume $r_u \leq \frac{1}{4}$. We wish to argue by
  contradiction that no clearing recovery rate vector exists.  If
  $r_B = 1$, then $r_A = 1$ per the cut-off and OR gadgets, and thus
  $p_{s, B} = 0$, so $B$ is in default. This is a contradiction.  If
  $r_B < 1$, then
    \begin{align*}
        r_B &= a'_B(r)\\
        &= \alpha e_B + p_{s, B} \\
        &= (1 - 1 + \alpha) e_B + p_{s, B}\\
        & = e_B + p_{s, B} - e_B(1 - \alpha) \\
        & = a_B(r) - e_B(1 - \alpha)  & (\text{since } a_B(r) = e_B + p_{s, B}) \\
        & < 1 - e_B(1 - \alpha)  & (\text{since } r_B < 1 \text{ we have } a_B < l_B = 1)\\
        & = 1 - \frac{4}{5}(1 - \alpha)\\
        & = \frac{4 \alpha + 1}{5}\\
        & \leq \frac{3 \alpha + 1}{4}\\
        & = K.
    \end{align*}
    Thus by the cut-off gadget $r_C = 0$, and thus $r_A = 0$ (since
    $r_u \leq \frac{1}{4}$). This means $p_{s, B} = \frac{4}{5}$, but
    this means $B$ is not in default. We again get a
    contradiction. Thus we conclude no clearing recovery rate vector
    exists.
\end{proof}
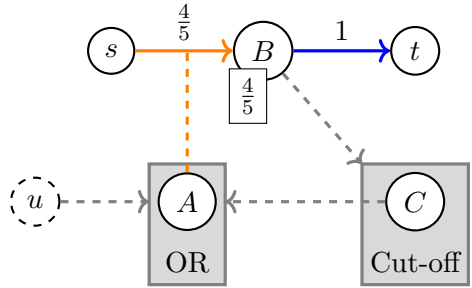
\begin{figure}[h!]
  \centering
  \begin{tikzpicture}
  
  \begin{scope}[every node/.style={circle, thick, draw, fill=white}]
    \node (u) at (-1,1.5) [dashed] {$u$};
    \node (s) at (0,3.5) {$s$};
    \node (B) at (2,3.5) {$B$};
    \node (t) at (4,3.5) {$t$};
    \node (C) at (4,1.5) {$C$};
    \node (A) at (1,1.5) {$A$};
  \end{scope}

  \begin{scope}[on background layer]
    \draw[on background layer,draw=gray, fill=gray!30, very thick] (0.5,0.4) rectangle (1.5,2);
    \draw[on background layer,draw=gray, fill=gray!30, very thick] (3.3,0.4) rectangle (4.7,2); 
  \end{scope}
  
  \begin{scope}[ every edge/.style={draw=blue,very thick}]
    \path [->] (B) edge node[above] {$1$} (t);
  \end{scope}
  
  \begin{scope}[every edge/.style={draw=orange,very thick}]
    \path [->] (s) edge node[above] {$\frac{4}{5}$} (B);
  \end{scope}
  
  \begin{scope}[every edge/.style={draw=orange,dashed,very thick}]
    \path[-] (A) edge  ($(s) !.5! (B)$);
  \end{scope}

  \begin{scope}[every edge/.style={draw=gray,dashed,very thick}]
    \path[->] (B) edge (3.3,2); 
    \path[->] (u) edge (0.5,1.5);
    \path[->] (C) edge (1.5,1.5);
  \end{scope}

  \begin{scope}[xshift=-0.2cm, yshift=-0.6cm, every node/.style={rectangle, draw=black, fill=white}]   
    \node (ew) at (2,3.5) {$\frac{4}{5}$};
  \end{scope}

  \begin{scope}
    \node (OR) at (1,0.7) {OR};
    \node (Cutoff) at (4, 0.7) {Cut-off};
  \end{scope}
\end{tikzpicture}    

  \caption{Infeasibility gadget. The cut-off gadget has input bank $B$ and output bank $C$. The OR gadget has input banks $u, C$ and output bank $A$.}
  \label{fig:infeas}
\end{figure}

\end{document}